%% file: manuscript.tex
\documentclass[12pt]{article}

\usepackage{natbib}
\usepackage[table]{xcolor}
\input{preamble}

\usepackage{fancyhdr}
\usepackage{abstract}
\usepackage[auth-sc,affil-sl]{authblk}
\usepackage[margin=0.97in]{geometry}
\usepackage{amssymb}
\usepackage{algpseudocode}

\newtheorem{theorem}{Theorem}
\newtheorem{proposition}{Proposition}

\newcommand{\tr}[1]{\text{tr}\bracks{#1}}

\newcommand{\ourtitle}{Optimal Designs with Robust Inference for Binary Treatment Effects}
\title{\ourtitle}

\author[1]{David Azriel\thanks{Electronic address: \texttt{davidazr@technion.ac.il}}}
\author[2]{Abba M. Krieger\thanks{Electronic address: \texttt{krieger@wharton.upenn.edu}}}
\author[3]{Adam Kapelner\thanks{Electronic address: \texttt{kapelner@qc.cuny.edu}; Principal Corresponding author}}

\affil[1]{\small Faculty of Data and Decision Sciences, The Technion, Haifa, Israel}
\affil[2]{\small Department of Statistics, The Wharton School of the University of Pennsylvania, USA}
\affil[3]{\small Department of Mathematics, Queens College, CUNY, USA}

\begin{document}
\maketitle

\begin{abstract}
We study randomized experiments with binary outcomes under Neyman’s nonparametric model, where covariate measurements are fixed but potential outcomes are random. In this setting we derive the exact variance of the difference‑in‑means estimator and characterize designs that minimize it. We show that any balanced design satisfying a covariate‑balance condition is asymptotically optimal, and we prove that a broad class of blocking designs satisfies this condition under mild smoothness assumptions. Because the variance depends on unknown success probabilities, unbiased variance estimation is impossible. We therefore develop two conservative estimators: a generalization of the Cochran–Mantel–Haenszel (CMH) statistic applicable to any balanced design, and an extension of Robins’ variance estimator for blocking designs. We establish conditions under which the CMH‑based estimator is asymptotically tight under local alternatives, thereby yielding asymptotically valid confidence intervals. Our theoretical and simulations results show that blocking and other designs that achieve covariate balance and sufficient degree of randomness, perform well when they are equipped with the CMH-based inference. Thus, we provide an experimental design and inference framework for incidence outcomes that is simultaneously variance‑optimal, conservative in finite samples and asymptotically tight.
\end{abstract}

\vfill
{\bf Keywords}: 
Confidence Interval of a Binomial Proportion Difference; Cochran-Mantel-Haenszel test, Blocking, Optimal experimental design, Randomization
\pagebreak

\section{Introduction}\label{sec:introduction}

We consider an experiment with $n$ subjects and two arms, which we call treatment and control. Subjects are measured on $p$ covariates denoted $\x_i \in \reals^p$ for the $i$th subject where $i=1,\dots,n$. The response of interest is binary and is denoted by $\Y = \bracks{Y_1, \ldots, Y_{n}}^\top \in \braces{0, 1}^n$. The parameter of interest is the difference between the mean success under treatment and under control. The setting we investigate is one in which all $\x_i$'s are seen beforehand and thus called \textit{observed covariates}. Based on these $\x_i$'s, the experimenter chooses the allocation vector $\w = \bracks{w_1, \ldots, w_{n}}^\top \in \braces{-1,+1}^n$, whose entries indicate whether the subject is placed in the treatment arm (coded numerically as +1 and sometimes denoted as T) or  in the control arm (coded numerically as -1 and sometimes denoted as C). Thus, $\w$ is realized from a generalized multivariate Bernoulli random variable $\W$, which is termed a randomized experimental \emph{design}. This design is selected by the experimenter. We consider \qu{balanced} designs where the number of subjects allocated to control and treatment is equal, i.e., $\prob{\sum_{i=1}^n W_i=0}=1$ and thus we assume even $n$. Further, each subject is equally likely to be assigned to treatment and control, i.e., $\expe{\W}=\zerovec_n$. The canonical example of such a design is called the balanced complete randomization design (BCRD), where $\W$ is randomly chosen from the set $\{\w\in\{-1,+1\}^n : \sum_{i=1}^n w_i =0 \}$.

We adopt the potential outcome framework, in which $\Y_T = \bracks{Y_{T,1}, \ldots, Y_{T,n}}^\top$ and $\Y_C = \bracks{Y_{C,1}, \ldots, Y_{C,n}}^\top$ denote the potential  {outcomes} under treatment and control, respectively. For each subject $i$,  {the} observed response $Y_i$ is either $Y_{T,i}$ if $W_i=+1$ or $Y_{C,i}$ if $W_i=-1$.

In order to define the design problem and the resulting statistical inference, one needs to decide which quantities are considered fixed (conditioned upon) and which {are} random. The design $\W$ is always considered random. In the literature, there are typically three scenarios, which are listed below, ordered by degree of randomness.

\begin{table}[h]
\centering
\begin{tabular}{c|ccc}
& \multicolumn{3}{c}{Random?} \\
Setting & Design & Potential Outcomes & Covariates \\
\hline
(a) & $\checkmark$ & &  \\
(b) & $\checkmark$ & $\checkmark$ &  \\
(c) & $\checkmark$ & $\checkmark$ & $\checkmark$
\end{tabular}
\end{table}


\noindent Setting (a) is called the Neyman model (e.g., \citealp{Freedman2008}) or the Neyman-Rubin model (e.g., \citealp{Sekhon2008}) and is widely studied (see \citealp{Imbens2015} and references therein).   
In the context of binary  {responses,} \cite{Robins1988} suggested a conservative confidence interval that is shorter than the standard Wald's interval and is valid asymptotically. More recently, several works  studied confidence intervals that are exact for finite $n$ \citep{Rigdon2015, Li2016, Aronow2023, Li2025exact}.

The Neyman model of Setting (a) is restrictive in that it assumes that the potential outcomes are deterministic functions of the covariates. That is, given the covariates, the potential binary responses are fixed. In the context of clinical trials, it seems unlikely that a positive response is determined only by the treatment assignment and the covariates. A more reasonable model is to assume that the treatment assignment and the covariates determine the probability of success or failure, as in Settings (b) and (c), rather than the actual response as in Setting (a). This is more realistic as there are certainly unobserved covariates that influence the response making the response random even when knowing $\x_i$ and $w_i$.

Consider Setting (c) and assume that $\{(Y_{T,i},Y_{C,i},\x_i)\}_{i=1,\ldots,n}$ are i.i.d. from some distribution. We define the parameter of interest to be the average treatment effect (ATE), $\frac{1}{n}\sum_{i=1}^n \expe{Y_{T,i}-Y_{C,i}}=\expe{Y_{T,1}}-\expe{Y_{C,1}}$ and we estimate it via the difference-in-means estimator expressed a number of ways

\bneqn\label{eq:tau_hat}
	\hat{\tau} := \Ybar_T - \Ybar_C = \frac{1}{n/2}\sum_{\braces{i:W_i=+1}} Y_i ~- \frac{1}{n/2}\sum_{\braces{i:W_i=-1}} Y_i = \frac{2}{n} \W^\top \Y. 
\eneqn

\noindent Under an equally-likely allocation design, $\hat{\tau}$ is unbiased. One can then derive its variance as

\begin{equation}\label{eq:var_c}
\var{\hat{\tau}} = \frac{2}{n}\parens{{p}_T(1-{p}_T)+{p}_C(1-{p}_C)},
\end{equation}

\noindent where ${p}_T:=\expe{Y_{T,i}}$ and ${p}_C:=\expe{Y_{C,i}}$. The derivation can be found in Section~\ref{app:variance_computation} of the Supplementary Material. Equation~\ref{eq:var_c} implies that under Setting (c), the variance does not depend on the design $\W$ (as long {as} it is balanced) and thus there is no difference in designs under Setting (c). This setting is further problematic as it assumes we average over all potential experiments with all $n$ samples of subjects from the population. We should be concerned with designing the best experiment for the $n$ subjects at hand.

Setting (c) is the most random setting, and Equation~\ref{eq:var_c} implies that the standard Wald's confidence interval is asymptotically valid under this setting. As mentioned above, \cite{Robins1988} constructed a valid confidence interval that is shorter than Wald's under Setting (a). This is not surprising because Setting (a) is less random than (c) and hence the variance of the estimator under Setting (a) is smaller. Below, in Section~\ref{sec:Robbins}, we study an extension of Robins' variance estimator to allow it to be valid under Setting (b).

Setting (a) is restrictive and Setting (c) is too random. The focus in this work is on Setting (b), where only the covariates are conditioned upon. A typical approach is to assume a parametric model for $p_{T,i}:=\cexpe{Y_{T,i}}{\x_i}$ and $p_{C,i}:=\cexpe{Y_{C,i}}{\x_i}$, canonically a linear logistic positive outcome model, and estimate its parameters (see \citealp{Imbens2015}, Chapter 8 and references therein). However, as shown by \cite{Freedman2008}, this approach fails if the model is misspecified. Therefore, \cite{Freedman2008} advocates for considering Setting (b) and calls it \qu{Neyman’s nonparametric setup}. This setting is rarely studied in the statistical literature but is more common in econometrics (see \citealp{Bai2024} and references therein). Since we consider Setting (b) from now on, all expectations and variances are conditioned on the $\x_i$'s unless stated otherwise. The ATE can then be expressed as 

\bneqn\label{eq:ate_parameter}
\tau:=\frac{1}{n}\sum_{i=1}^n(p_{T,i}-p_{C,i})
\eneqn

\noindent and a typical estimator is $\hat{\tau}$ of Equation~\ref{eq:tau_hat}. 

Our goal herein is twofold: we wish to minimize $\var{\hat{\tau}}$ and to accurately (or conservatively) estimate the variance. 
We derive a formula for $\var{\hat{\tau}}$ and show that any design that satisfies a balanced-covariate condition is optimal in the sense that it asymptotically minimizes the variance. We further show that a family of blocking designs satisfies this condition.  

The variance of ${\hat{\tau}}$ is generally a complicated function of $p_{T,1},\ldots, p_{T,n},p_{C,1},\ldots,p_{C,n}$ and {an} unbiased estimator of $\var{\hat{\tau}}$ does not exist. Rather, as in \cite{Robins1988}, we construct an estimator $\widehat{\var{\hat{\tau}}}$ that is conservative, i.e., 
$\expenostr{\widehat{\var{\hat{\tau}}}}\ge \var{\hat{\tau}}$.
Our estimator for the variance comes from the Cochran-Mantel-Haenszel test statistic \citep{Cochran1954,Mantel1959}. 
Furthermore, we show that under local alternatives $\widehat{\var{\hat{\tau}}}$ is asymptotically tight under some conditions, which are satisfied by {blocking} designs. The final result is an optimal design and an inference method such that $\var{\hat{\tau}}$ is minimal asymptotically, {is} conservative for finite $n$ and $\widehat{\var{\hat{\tau}}}$ is tight asymptotically.

\section{Theoretical Results} 

\subsection{Setup}\label{sec:setting}

In Setting (b), the potential outcomes and the design are random and the observed covariates are fixed. The observed covariates determine $\p_T:=[p_{T,1},\ldots,p_{T,n}]^\top$ and $\p_C:=[p_{C,1},\ldots,p_{C,n}]^\top$ which are thus fixed and unknown. The potential outcomes are $Y_{T,i} \sim {\rm Bernoulli}(p_{T,i})$ and $Y_{C,i} \sim {\rm Bernoulli}(p_{C,i})$ and all these random variables are assumed independent. The responses $\Y$ can be conveniently expressed as a function of the potential outcomes and the design

\begin{equation}\label{eq:Y_obs}
\Y = \frac{\Y_T+\Y_C}{2}+ \W \odot \frac{\Y_T-\Y_C}{2}
\end{equation}

\noindent where $\odot$ denotes entry-wise product. Our parameter of interest is the ATE (Equation~\ref{eq:ate_parameter}) and the estimator we wish to investigate is $\hat{\tau}$ (Equation~\ref{eq:tau_hat}). In \cite{Kapelner2023} we showed that if the design is balanced{,} then $\hat{\tau}$ is unbiased for $\tau$ and we proved that


\begin{equation}\label{eq:var}
	\var{\hat{\tau}}=\frac{1}{n^2}\parens{(\p_T+\p_C)^\top \bSigmaw (\p_T+\p_C) +2 [\p_T^\top(\onevec -\p_T) + \p_C^\top(\onevec -\p_C)]   },  
\end{equation}

\noindent where $\bSigmaw:=\expe{\W \W^\top}$ as we only examine designs that satisfy $\expe{\W} = \zerovec_n$. It then follows that

\begin{equation}\label{eq:var_LB}
	\var{\hat{\tau}} \ge \frac{2}{n^2}[\p_T^\top(\onevec -\p_T) + \p_C^\top(\onevec -\p_C)].  
\end{equation}

\noindent Further, if we assume that $\p_T$ and $\p_C$ are bounded away from $\onevec_n$ and $\zerovec_n$ respectively, then the lower bound in Equation~\ref{eq:var_LB} is $O(1/n)$ and any design that satisfies

\begin{equation}\label{eq:balance_condition}
	\lim_{n \to \infty} \frac{1}{n}(\p_T+\p_C)^\top \bSigmaw (\p_T+\p_C) =0 
\end{equation}

\noindent is asymptotically optimal. We now show that a family of blocking designs {satisfies} Condition \ref{eq:balance_condition} under mild conditions. 

\subsection{Results for Blocking Designs}


We consider \textit{uniform} block designs with $B$ blocks where all blocks are uniformly sized with $n_B := n/B$ subjects. We assume $n_B$ to be even for convenience. The \textit{optimal} uniform block design partitions $\{1,\ldots,n\}$ into $B$ equally sized disjoint blocks with {indices} $I_1,\ldots,I_B$ such that the sum of the $B$ intrablock \textit{homogeneity metrics} are minimized. Such a homogeneity metric could be the variance, i.e., $\sum_{i \in I_b}\|\x_i - \bar{\x}_b\|^2$ where $\bar{\x}_b:=\frac{1}{n_B}\sum_{i \in I_b} \x_i$; this metric is assumed in Theorem \ref{thm:blocking_balanced} below. This is equivalent to minimization of the squared Euclidean norm between a random pair in a random block. When the $\x$'s are whitened this is equivalent to minimization of the total Mahalanobis distance, which is defined later in Section~\ref{sec:simulations_setup}.

Solving for the optimal uniform blocks is called \textit{balanced clustering}, and there are efficient algorithms to solve it \citep{Malinen2014}. In the special case where the block size is two, the problem can be solved by the nonbipartite matching algorithm  \citep{Edmonds1965}. Then, BCRD is run within each block, i.e., $n_B/2$ subjects are chosen at random and  {allocated} to T and the rest $n_B/2$ subjects are allocated to C. It follows that $\bSigmaw$ is block diagonal consisting of $B$ blocks of size $n_B \times n_B${,} where each block consists of entries with 1 on the diagonal and $-\oneover{n_B - 1}$ on the off-diagonal. 

Returning to Condition \ref{eq:balance_condition}{,} if we denote $v_i:=\frac{p_{T,i}+p_{C,i}}{2}$, then
\[
\frac{1}{n}(\p_T+\p_C)^\top \bSigmaw (\p_T+\p_C)=\frac{4}{B}\sum_{b=1}^B \frac{1}{n_B-1} \sum_{i \in I_b}(v_i-\bar{v}_b)^2,
\]
where $\bar{v}_b$ is the average of the $b$th block. It follows that if the average variance within {a block} converges to zero, Condition~\ref{eq:balance_condition} is satisfied. Theorem \ref{thm:blocking_balanced} below shows that this is satisfied under mild conditions when the number of blocks diverges. The proof is found in Section~\ref{app:proof_thm_1} of the Supplementary Material.

\begin{theorem}\label{thm:blocking_balanced}
	Suppose that $\|\x_i\|$ is bounded for all $i$, and there exist functions $h_T$ and $h_C$ such that $p_{T,i}=h_T(\x_i)$ and $p_{C,i}=h_C(\x_i)$ for all $i${,} and that $h_T$ and $h_C$ are Lipschitz continuous functions. Then, Equation~\ref{eq:balance_condition}  is satisfied for all optimal blocking designs that minimize the block variance with $B \to \infty$.    
\end{theorem}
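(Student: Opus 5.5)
By the identity displayed just before Theorem~\ref{thm:blocking_balanced}, it suffices to show that
\[
\frac{4}{B}\sum_{b=1}^B \frac{1}{n_B-1}\sum_{i\in I_b}(v_i-\bar v_b)^2 \to 0 .
\]
The plan has two steps. First, transfer the problem from the $v_i$'s to the covariates using the Lipschitz assumption. Second, bound the optimal within-block covariate variance by the variance of an explicit competing partition built from a grid on the bounded covariate region.

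\textbf{Step 1 (reduction to covariates).} Let $h:=(h_T+h_C)/2$, so that $v_i=h(\x_i)$, and $h$ is Lipschitz with some constant $L$. Because $\bar v_b$ minimizes $c\mapsto\sum_{i\in I_b}(v_i-c)^2$, we may replace $\bar v_b$ by $h(\bar\x_b)$. This gives
\[
\sum_{i\in I_b}(v_i-\bar v_b)^2\le \sum_{i\in I_b}(h(\x_i)-h(\bar\x_b))^2\le L^2\sum_{i\in I_b}\|\x_i-\bar\x_b\|^2 .
\]
Write $V(\mathcal P):=\sum_b\sum_{i\in I_b}\|\x_i-\bar\x_b\|^2$ for a partition $\mathcal P$ into blocks of size $n_B$. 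Using $n_B-1\ge n_B/2$, the target quantity is at most
\[
\frac{8L^2}{B n_B}V(\mathcal P^*)=\frac{8L^2}{n}V(\mathcal P^*),
\]
where $\mathcal P^*$ is the optimal blocking. Since $\mathcal P^*$ minimizes $V$, it is enough to exhibit some partition $\mathcal P$ into equal blocks with $V(\mathcal P)/n\to 0$.

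\textbf{Step 2 (a competing partition).} This is the main step. Suppose $\|\x_i\|\le M$. A convenient partition should make most blocks small in diameter while the leftover contribution remains negligible. The plan is as follows.

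\begin{itemize}
\item Fix $\epsilon>0$ and cover the ball of radius $M$ by $K=K(\epsilon,p,M)$ cubes of diameter at most $\epsilon$.
\item Inside each cube, group points greedily into full blocks of size $n_B$. Each such block has within-block squared deviation at most $n_B\epsilon^2$.
\item Each cube leaves fewer than $n_B$ unassigned points, so at most $Kn_B$ points remain overall. Group these arbitrarily into blocks. Each leftover point contributes at most $(2M)^2$.
\end{itemize}

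This yields
\[
\frac{V(\mathcal P)}{n}\le \epsilon^2+\frac{4M^2Kn_B}{n}=\epsilon^2+\frac{4M^2K}{B}.
\]
Since $B\to\infty$, the second term vanishes, so $\limsup_n V(\mathcal P^*)/n\le\epsilon^2$. Because $\epsilon$ is arbitrary, $V(\mathcal P^*)/n\to0$. Note that $n_B$ may be fixed or growing; only $B\to\infty$ is used.

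The one point to check carefully is the leftover bookkeeping. The cube-based count $Kn_B$ must be $o(n)$, which is exactly the condition $B\to\infty$. The remaining steps are routine consequences of the bounded support and the Lipschitz property.
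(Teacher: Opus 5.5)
Your proof is correct and is essentially the paper's argument. You reduce via the Lipschitz property to the within-block covariate variance, then bound the optimal blocking by a grid-based competitor whose overflow has at most (number of cells)$\times n_B = o(n)$ points; the only difference is that the paper ties the grid to $B$ (edge $1/\lfloor B^{1/2p}\rfloor$, roughly $\sqrt{B}$ cells) to get the explicit rate $8p\bigl((B^{1/2p}-1)^{-2}+B^{-1/2}\bigr)$, whereas your fixed-$\epsilon$ grid with a $\limsup$ argument gives convergence without a rate.
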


Assuming that  the Lipschitz constant is 1 and that the support of the $\x_i$'s is contained in the p-dimensional unit cube, the proof of Theorem \ref{thm:blocking_balanced} shows that

\beqn
\frac{1}{n}(\p_T+\p_C)^\top \bSigmaw (\p_T+\p_C) \le 8p\parens{\frac{1}{(B^{1/2p} -1)^2}+\frac{1}{\sqrt{B}}},
\eeqn

\noindent and this bound converges to zero as $B\to \infty$. A similar bound can be obtained if the number of blocks is of different sizes but the ratio of the maximal and minimal block size is bounded. Thus, the result of Theorem \ref{thm:blocking_balanced} can be extended to non-uniform blocking designs provided that the latter condition is satisfied.


\subsection{Estimator Variance Estimation} 

\subsubsection{The CMH Estimator} \label{sec:est_var}

The variance of $\hat{\tau}$ is given in Equation~\ref{eq:var} and is a complicated {function} of the unknown $\p_T$ and $\p_C$, which cannot be estimated. Therefore, an unbiased estimator of $\var{\hat{\tau}}$ does not exist. Following \citet{Neyman1990,Robins1988} and others, we aim to construct a conservative estimator $\widehat{\var{\hat{\tau}}}$ in the sense that $\mathbb{E}[\widehat{\var{\hat{\tau}}}] \ge \var{\hat{\tau}}$. In Section \ref{sec:local}, we show that our estimator is asymptotically tight under  {a} local alternative {asymptotic regime}. 

Specifically, for a general design $\W$ we define

\bneqn\label{eq:cmh_variance_general}
V_{CMH}:=\frac{4}{n^2} \Y^\top \bSigmaw \Y. 
\eneqn

\noindent We showed in a previous paper that the above expression is the variance estimator of the Cochran-Mantel-Haenszel test statistic \citep[Equation 3]{azriel2026block}. We also showed therein that for blocking designs we can express the estimator as

\bneqn\label{eq:cmh_variance_blocking}
V_{CMH} = \frac{1}{n^2(n_B - 1)} \sum_{b=1}^B n_{1_b} n_{0_b}
\eneqn

\noindent where $n_{1_b}, n_{0_b}$ are the number of positive and negative responses in block $b$ respectively. This previous paper considers only blocking designs, whereas here we extend our analysis to any balanced design. 

An interesting feature of $V_{CMH}$ is that it does not depend explicitly on $\W$ (only through $\Y$),  {but rather} it depends on {the} covariance matrix of $\W$, i.e., on $\bSigmaw$. This matrix can in principle be estimated accurately by computing the sample variance-covariance matrix over many draws of $\w$ since $\W$ is determined by the experimenter, i.e., $\bSigmaw \approx \frac{1}{R} \sum_{r=1}^R \w_r \w_r ^\top$ where the set $\{\w_1,\ldots,\w_R\}$ is drawn from $\W$. 
In some cases the design is to select uniformly from a set $\{\w_1,\ldots,\w_R\}$. In that case the expression applies and is no longer approximate. In the current work, an important example is blocking designs, in which $\bSigmaw$ is explicitly known (the block diagonal form described above) and thus is more amenable to theoretical exploration.


The next theorem shows that $V_{CMH}$ is conservative. To state the theorem{,} we define $\bSigmaw^{\odot 2}:=\bSigmaw \odot \bSigmaw$ and ${\bseta}:=\frac{1}{2}\parens{\p_T-\p_C}$. The proof is found in Section~\ref{app:thm_2_proof} of the Supplementary Material.

\begin{theorem}\label{thm:exp_var_hat}
Consider Neyman's nonparametric model and suppose that the design is balanced. Then, 
\[
\expe{V_{CMH}}- \var{\hat{\tau}}=\frac{4}{n^2}{\boldsymbol \eta}^\top \bSigmaw^{\odot 2} {\boldsymbol \eta}.
\]
By {the} Schur product theorem  \citep[Theorem 7.5.3]{horn2013matrix}, $\bSigmaw^{\odot 2}$ is positive semi-definite, and therefore $\expe{V_{CMH}} \ge \var{\hat{\tau}}$.
\end{theorem}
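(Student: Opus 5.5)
The plan is to compute $\expe{V_{CMH}} = \frac{4}{n^2}\expe{\Y^\top \bSigmaw \Y}$ directly from the representation of $\Y$ in Equation~\ref{eq:Y_obs} and then compare with the formula for $\var{\hat{\tau}}$ in Equation~\ref{eq:var}. Write $\mathbf{A}:=\frac{1}{2}(\Y_T+\Y_C)$ and $\mathbf{D}:=\frac{1}{2}(\Y_T-\Y_C)$, so that $\Y=\mathbf{A}+\W\odot\mathbf{D}$. The design $\W$ is chosen by the experimenter and is independent of the potential outcomes. Hence I will first take expectation over $\W$ conditionally on $(\Y_T,\Y_C)$, and afterwards over the Bernoulli potential outcomes.

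\textbf{Step 1 (expectation over $\W$).} Expand
\[
\Y^\top\bSigmaw\Y=\mathbf{A}^\top\bSigmaw\mathbf{A}+2\mathbf{A}^\top\bSigmaw(\W\odot\mathbf{D})+(\W\odot\mathbf{D})^\top\bSigmaw(\W\odot\mathbf{D}).
\]
The cross term has zero conditional mean because $\expe{\W}=\zerovec_n$. For the quadratic term, $\sum_{i,j} D_iD_j(\bSigmaw)_{ij}\expe{W_iW_j}=\mathbf{D}^\top\bSigmaw^{\odot 2}\mathbf{D}$, since $\expe{W_iW_j}=(\bSigmaw)_{ij}$. Therefore
\[
\expe{V_{CMH}}=\frac{4}{n^2}\parens{\expe{\mathbf{A}^\top\bSigmaw\mathbf{A}}+\expe{\mathbf{D}^\top\bSigmaw^{\odot2}\mathbf{D}}}.
\]

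\textbf{Step 2 (expectation over the outcomes).} All $Y_{T,i},Y_{C,i}$ are independent. Hence the coordinates of $\mathbf{A}$ are independent, with means $\frac12(p_{T,i}+p_{C,i})$, and likewise for $\mathbf{D}$, with means $\eta_i$. Both have coordinate variance
\[
s_i:=\tfrac14\bracks{p_{T,i}(1-p_{T,i})+p_{C,i}(1-p_{C,i})}.
\]
For a symmetric matrix $M$ and a vector $\mathbf{Z}$ with independent coordinates, $\expe{\mathbf{Z}^\top M\mathbf{Z}}=\expe{\mathbf{Z}}^\top M\expe{\mathbf{Z}}+\sum_i M_{ii}\var{Z_i}$. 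Since $W_i^2=1$, both $\bSigmaw$ and $\bSigmaw^{\odot2}$ have unit diagonal. This gives
\[
\expe{\mathbf{A}^\top\bSigmaw\mathbf{A}}=\tfrac14(\p_T+\p_C)^\top\bSigmaw(\p_T+\p_C)+\textstyle\sum_i s_i,
\qquad
\expe{\mathbf{D}^\top\bSigmaw^{\odot2}\mathbf{D}}=\bseta^\top\bSigmaw^{\odot2}\bseta+\textstyle\sum_i s_i.
\]

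\textbf{Step 3 (comparison).} Multiplying by $4/n^2$, the first two pieces combine to
\[
\frac{1}{n^2}\parens{(\p_T+\p_C)^\top\bSigmaw(\p_T+\p_C)+2\bracks{\p_T^\top(\onevec-\p_T)+\p_C^\top(\onevec-\p_C)}},
\]
which is exactly $\var{\hat{\tau}}$ from Equation~\ref{eq:var}. What remains is $\frac{4}{n^2}\bseta^\top\bSigmaw^{\odot2}\bseta$, which is the claimed identity. The inequality $\expe{V_{CMH}}\ge\var{\hat{\tau}}$ then follows from the Schur product theorem, as stated.

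The only delicate point is bookkeeping. The two variance contributions $\sum_i s_i$ (one from $\mathbf{A}$, one from $\mathbf{D}$) must add up to produce the factor $2$ in Equation~\ref{eq:var}. Checking this requires that the diagonal of $\bSigmaw^{\odot2}$ equals one, which holds because $W_i^2=1$. Beyond that, the argument is a routine second-moment computation.
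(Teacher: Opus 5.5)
Your proof is correct, but it runs the conditioning in the opposite order from the paper.

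\textbf{The paper's route.} It conditions on $\W$. Given the assignment, the observed $Y_i$ are independent with $\cexpe{Y_i}{W_i}=v_i+W_i\eta_i$. It then treats the off-diagonal and diagonal terms of $\Y^\top\bSigmaw\Y$ separately. The off-diagonal terms give $v_{i_1}v_{i_2}+(\bSigmaw)_{i_1,i_2}\eta_{i_1}\eta_{i_2}$, and the diagonal terms use $Y_i^2=Y_i$. It reassembles these as $\v^\top\bSigmaw\v+\v^\top(\onevec-\v)+\bseta^\top\bSigmaw^{\odot 2}\bseta-\|\bseta\|^2$. Matching Equation~\ref{eq:var} then requires the extra identity $v_i(1-v_i)=\frac12[p_{T,i}(1-p_{T,i})+p_{C,i}(1-p_{C,i})]+\eta_i^2$.

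\textbf{Your route.} You condition on the potential outcomes and average over $\W$ first. The Hadamard square then appears in one line from $\expe{W_iW_j}(\bSigmaw)_{ij}$. What remains is two standard quadratic-form expectations with independent coordinates, so you need neither the diagonal/off-diagonal bookkeeping nor the $v(1-v)$ identity.

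\textbf{What your route buys.} It never uses $Y_i^2=Y_i$, only first and second moments of the potential outcomes. So it extends verbatim to non-binary outcomes, with $p(1-p)$ replaced by the potential-outcome variances; this is one of the extensions the paper lists as future work. Also, since $W_i^2=1$ gives $\hat\tau=\frac{2}{n}(\W^\top\mathbf{A}+\onevec^\top\mathbf{D})$, your decomposition yields Equation~\ref{eq:var} itself in two more lines, so you would not need to cite it. It also shows that the bias is exactly the mean part $\bseta^\top\bSigmaw^{\odot2}\bseta$ of $\expe{\mathbf{D}^\top\bSigmaw^{\odot 2}\mathbf{D}}$. In $\var{\hat\tau}$ that mean is centered out, because $\mathbf{D}$ enters only through $\onevec^\top\mathbf{D}$.

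\textbf{What the paper's route buys.} Its conditional-on-$\W$ viewpoint is the one reused in the proof of Theorem~\ref{thm:local}, where $\cexpe{\Y^\top\bSigmaw\Y}{\W}$ has to be analyzed explicitly.
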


Theorem \ref{thm:exp_var_hat} shows that the difference between $\expe{V_{CMH}}$ and $\var{\hat{\tau}}$ is $\frac{4}{n^2}{\boldsymbol \eta}^\top \bSigmaw^{\odot 2} {\boldsymbol \eta}$. This difference is always positive{,} making the variance estimator conservative, but it is of interest to understand when the difference is small. 
The difference is zero if $\bseta={\bf 0}$, i.e., if $p_{T,i}=p_{C,i}$ for all $i\in\{1,\ldots,n\}$. This condition is a slight modification of Fisher's sharp null, where the probabilities of success are equal between treatment and control rather than the actual responses. Thus, $V_{CMH}$ is tight when the difference between $\p_T$ and $\p_C$ is small. In Section \ref{sec:local}{,} we consider {a} local alternatives {asymptotic regime} where $\tau=O(1/\sqrt{n})$ and show that in this regime, $V_{CMH}$ is asymptotically tight under additional conditions on the design.

We next argue informally that the difference term, ${\boldsymbol \eta}^\top \bSigmaw^{\odot 2} {\boldsymbol \eta}$, for blocking designs is decreasing with $n_B$ (the block size). Recall that in blocking designs 
 $\bSigmaw$ is block diagonal{,} consisting of $B$ blocks of size $n_B \times n_B${,} where each block consists of entries with 1 on the diagonal and $-1/(n_B - 1)$ on the off-diagonal. Therefore,
\begin{equation}\label{eq:diff.term.blocking}
{\boldsymbol \eta}^\top \bSigmaw^{\odot 2} {\boldsymbol \eta}=\frac{n_B}{n_B-1} \| {\bseta}\|^2 -\frac{1}{n_B-1}{\boldsymbol \eta}^\top \bSigmaw {\boldsymbol \eta}.    
\end{equation}

The second term in Equation~\ref{eq:diff.term.blocking} is of smaller order than the first because when $n_B$ is small{,} the blocks are homogeneous in $\p_T$ and $\p_C$, making ${\boldsymbol \eta}^\top \bSigmaw {\boldsymbol \eta}$ small{; on the other hand} for large $n_B$, the term $\frac{1}{n_B-1}{\boldsymbol \eta}^\top \bSigmaw {\boldsymbol \eta}$ is small because of $n_B$ in the denominator. It follows that the dominant term in Equation~\ref{eq:diff.term.blocking} is $\frac{n_B}{n_B-1} \| {\bseta}\|^2$, which is decreasing in $n_B$ and therefore so is (approximately) ${\boldsymbol \eta}^\top \bSigmaw^{\odot 2} {\boldsymbol \eta}$. Thus, there is a trade-off between making $\var{\hat{\tau}}$ small and making the difference term small. The former requires small $n_B$ (as in Theorem \ref{thm:blocking_balanced}) while for the latter,  {larger} $n_B$ is preferred. 

The maximal eigenvalue $\lambda_{max}$ of $\bSigmaw$ may represent the randomness or robustness of the design \citep{Harshaw2024}. For blocking designs{,} it is equal to $\frac{n_B}{n_B-1}$, which appears in Equation~\ref{eq:diff.term.blocking}, implying that the design is more random or robust as $n_B$ increases. Thus, we expect for a hormetic U-shape in performance over $n_B$ with an optimal block size between the unrestricted BCRD ($n_B = n$ and $B=1$) and the relatively more restricted pairwise matching design ($n_B = 2$ and $B = n/2$). This U-shape is due a trade-off between covariate imbalance and randomness which seems to be fundamental in experimental design \citep{Kapelner2021,Harshaw2024}.  

A similar trade-off holds for general designs; by Equation~\ref{eq:var}, $\var{\hat{\tau}}$ is small when $(\p_T+\p_C)^\top \bSigmaw (\p_T+\p_C)$ is minimal, which can be achieved when the design balances the covariates between treatment and control. Generally, this is achieved for $\bSigmaw$ when $\lambda_{max}$ is large \citep{Kapelner2021,Harshaw2024}. On the other hand, the difference term can be bounded by the Fan-Horn inequality \citep[][page 312]{Horn_Johnson_1991},

\beqn
{\boldsymbol \eta}^\top \bSigmaw^{\odot 2} {\boldsymbol \eta} \le \lambda_{max}^2 \| {\boldsymbol \eta}\|^2, 
\eeqn

\noindent It follows that the difference can be bounded when $\lambda_{max}$ is small. Thus, again, minimizing $\var{\hat{\tau}}$ requires large $\lambda_{max}$, while the difference is small (or at least upper bounded) when $\lambda_{max}$ is small.

\subsubsection{The Robins (1988) Estimator}\label{sec:Robbins}

As mentioned in the introduction, \citet{Robins1988} suggested a confidence interval {for} a similar problem in Setting (a), when the potential outcomes are fixed under BCRD (he {also considered} non-balanced designs, but these are out of the scope of our work herein). 

We first extend his variance estimator to general blocking designs and then to  Neyman's nonparametric model. In the fixed potential outcomes setting{,}

\begin{equation}\label{eq:var_cond_Y}
\cvar{\hat{\tau}}{\Y_T,\Y_C}=\frac{1}{n^2} (\Y_T+\Y_C)^\top \bSigmaw (\Y_T+\Y_C).    
\end{equation}

\noindent Robins constructs a conservative variance estimator for $\cvar{\hat{\tau}}{\Y_T,\Y_C}$ for BCRD as follows:
\[
V_{Robbins}:=\frac{m_1(1-m_1)}{n/2}+\frac{m_0(1-m_0)}{n/2}+\frac{(2m_0-m_1)(1-m_1)-m_0(1-m_0)}{n},
\]
where $m_1=\max\{\hat{p}_T,\hat{p}_C \}$, $m_0=\min\{\hat{p}_T,\hat{p}_C \}$ and $\hat{p}_T=\frac{\sum_{i=1}^n I(W_i=1) Y_i}{n/2}$, $\hat{p}_C=\frac{\sum_{i=1}^n I(W_i=-1) Y_i}{n/2}$. It is easy to extend his variance estimator to an arbitrary  {blocking} design with $B$ blocks and block size $n_B=n/B$. Due to independence of the blocks, the conservative variance estimator becomes

\footnotesize
\begin{equation} \label{eq:V_Robbins}
V_{Robbins}= \frac{1}{B^2} \sum_{b=1}^B\left[ 
\frac{m_{1,b}(1-m_{1,b})}{n_B/2}
+\frac{m_{0,b}(1-m_{0,b})}{n_B/2}
+\frac{(2m_{0,b}-m_{1,b})(1-m_{1,b})-m_{0,b}(1-m_{0,b})}{n_B} 
\right] 
\end{equation} 
\normalsize

\noindent where $m_{1,b}=\max\{\hat{p}_{T,b},\hat{p}_{C,b} \}$, $m_{0,b}=\min\{\hat{p}_{T,b},\hat{p}_{C,b} \}$ and $\hat{p}_{T,b}:=\oneover{n_B/2}{\sum_{i \in I_b} I(W_i=1) Y_i}$, $\hat{p}_{C,b}:=\oneover{n_B/2}{\sum_{i \in I_b} I(W_i=-1) Y_i}$. 

We now extend this estimator to Neyman's nonparametric model. The variance in the fixed outcome model is given in Equation~\ref{eq:var_cond_Y}. Taking its expectation yields

\beqn
\frac{1}{n^2}\expe{(\Y_T+\Y_C)^\top \bSigmaw (\Y_T+\Y_C)}
=\frac{1}{n^2}\parens{
(\p_T+\p_C)^\top \bSigmaw (\p_T+\p_C)
+\p_T^\top(\onevec -\p_T) 
+ \p_C^\top(\onevec -\p_C)
}.
\eeqn

\noindent Comparing this term to $\var{\hat{\tau}}$ as given in Equation~\ref{eq:var}, we need to add a conservative estimate to the component
\[
\frac{1}{n^2}\parens{\p_T^\top(\onevec -\p_T) + \p_C^\top(\onevec -\p_C)}. 
\]
The function $g(t)=t(1-t)$ is concave and therefore, by Jensen's inequality,
\[
\frac{1}{n^2}\parens{\p_T^\top(\onevec -\p_T) + \p_C^\top(\onevec -\p_C)}
\le \frac{1}{n}\parens{\bar{p}_T(1 -\bar{p}_T) + \bar{p}_C(1 -\bar{p}_C)}, 
\]
where $\bar{p}_T:=\frac{1}{n}\sum_{i=1}^n p_{T,i}$ and $\bar{p}_C:=\frac{1}{n}\sum_{i=1}^n p_{C,i}$, which can be consistently estimated by replacing $\bar{p}_T$ and $\bar{p}_C$ with $\hat{p}_T$ and $\hat{p}_C$. Thus, our final variance estimator is

\bneqn\label{eq:robins_var}
V_{Robbins-ext}:=
V_{Robbins} 
+ \frac{1}{n} \parens{\hat{p}_T(1-\hat{p}_T)+ \hat{p}_C(1-\hat{p}_C)}.
\eneqn

For $V_{CMH}${,} we showed that for each $n$ {it} {is} conservative in the sense that $\expe{V_{CMH}}\ge \var{\tau}$. However, $V_{Robbins-ext}$ is conservative only for large samples, i.e., it is required that $n_B$ {be sufficiently} large to allow consistent estimation of the average $p_{T,i}$'s and $p_{C,i}$'s of each block. Another important difference is that $V_{CMH}$ holds for general balanced designs{,} whereas $V_{Robbins-ext}$ is only for blocking designs.

To illustrate the differences between $V_{CMH}$ and $V_{Robbins-ext}${,} we consider the BCRD design. The following proposition computes the {variance} estimators. Its proof is found in Section~\ref{app:proof_prop_1} of the Supplementary Material.

\begin{proposition}\label{prop:BCRD}
Under the BCRD design,
\[
V_{Robbins-ext}
=\frac{2}{n}\parens{  
\hat{p}_T(1-\hat{p}_T)+ \hat{p}_C(1-\hat{p}_C) + m_0(1-m_1)
}
\]
and
\[
V_{CMH}
=\frac{2}{n-1}\parens{ 
\hat{p}_T(1-\hat{p}_T)+ \hat{p}_C(1-\hat{p}_C) 
+ \frac{ (\hat{p}_T - \hat{p}_C)^2}{2}
}.
\]
\end{proposition}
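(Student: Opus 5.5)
Both identities are direct algebra once we specialize to $B=1$, $n_B=n$; no probabilistic argument is needed. I would treat the two estimators separately.

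For $V_{Robbins-ext}$, I will take Equation~\ref{eq:V_Robbins} with $B=1$ and $n_B=n$. The block quantities then reduce to the global ones: $\hat p_{T,1}=\hat p_T$, $\hat p_{C,1}=\hat p_C$, $m_{1,1}=m_1$, $m_{0,1}=m_0$. This gives exactly the original BCRD form of $V_{Robbins}$. Since $\{m_0,m_1\}=\{\hat p_T,\hat p_C\}$, the added term in Equation~\ref{eq:robins_var} can be rewritten as $\frac1n\parens{m_1(1-m_1)+m_0(1-m_0)}$. Putting everything over the common denominator $n$, the bracket becomes
\[
3m_1(1-m_1)+2m_0(1-m_0)+(2m_0-m_1)(1-m_1).
\]
Expanding $(2m_0-m_1)(1-m_1)=2m_0(1-m_1)-m_1(1-m_1)$ reduces this to $2m_1(1-m_1)+2m_0(1-m_0)+2m_0(1-m_1)$. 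Converting back to $\hat p_T,\hat p_C$ gives the stated formula.

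For $V_{CMH}$, I will work from the general definition in Equation~\ref{eq:cmh_variance_general}, using the BCRD covariance
\[
\bSigmaw=\frac{n}{n-1}\I_n-\frac{1}{n-1}\onevec\onevec^\top.
\]
Write $S:=\onevec^\top\Y$ for the number of successes and use $\Y^\top\Y=S$, which holds because $\Y$ is binary. Then $\Y^\top\bSigmaw\Y=\frac{nS-S^2}{n-1}=\frac{S(n-S)}{n-1}$, and hence
\[
V_{CMH}=\frac{4\,S(n-S)}{n^2(n-1)}.
\]
This is the single-block case of Equation~\ref{eq:cmh_variance_blocking}, up to normalization.

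Next, because the design is balanced, $S=\frac n2(\hat p_T+\hat p_C)$ and $n-S=\frac n2(2-\hat p_T-\hat p_C)$. Writing $a=\hat p_T$ and $c=\hat p_C$, this gives $V_{CMH}=\frac{(a+c)(2-a-c)}{n-1}$. The remaining step is the identity
\[
(a+c)(2-a-c)=2a(1-a)+2c(1-c)+(a-c)^2,
\]
which follows by expanding both sides. It yields the stated expression.

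The main point requiring care is the normalization constant in the CMH blocking formula. Deriving directly from Equation~\ref{eq:cmh_variance_general}, rather than quoting Equation~\ref{eq:cmh_variance_blocking}, avoids any ambiguity about the factor $4$. The Robins part is purely bookkeeping with $m_0,m_1$.
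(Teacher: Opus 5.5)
Your proposal is correct and matches the paper's proof. The Robins part is the same common-denominator bookkeeping in $m_0,m_1$, and for CMH the paper likewise works from Equation~\ref{eq:cmh_variance_general}: it obtains $\frac{4}{n-1}\bar{Y}(1-\bar{Y})$ and then uses $\rho(1-\rho)=\frac12[\hat p_T(1-\hat p_T)+\hat p_C(1-\hat p_C)]+\frac14(\hat p_T-\hat p_C)^2$ with $\rho=\frac{\hat p_T+\hat p_C}{2}$, which is your identity divided by four. Your caution about the factor $4$ is justified, because Equation~\ref{eq:cmh_variance_blocking} as printed is off by exactly that factor relative to Equation~\ref{eq:cmh_variance_general}.
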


\noindent If we make the approximation $n\approx n-1$, then  $V_{CMH} \le V_{Robbins-ext}$ iff $(\hat{p}_T - \hat{p}_C)^2 \le 2m_0(1-m_1)$. Recall that $m_1=\max\{\hat{p}_T,\hat{p}_C \}$ and $m_0=\min\{\hat{p}_T,\hat{p}_C \}$. Thus, if $(\hat{p}_T - \hat{p}_C)^2$ is sufficiently small, $V_{CMH} \le V_{Robbins-ext}$. For illustration, suppose that $\hat{p}_T=0.5+a$ and $\hat{p}_C=0.5-a$ where $a\in (0,1/2)$. Then, the inequality reads
\[
4 a^2 \le 2 (0.5-a)^2  
\Longleftrightarrow 
a \le  (\sqrt{2}-1)/2 \approx 0.21.
\]
Thus, $V_{CMH}$ is smaller than $V_{Robbins-ext}$ when $\hat{p}_T$ and $\hat{p}_C$ are close to each other{,} and otherwise $V_{Robbins-ext}$ is smaller. This finding is {consistent with} those of Section \ref{sec:est_var}, namely that $V_{CMH}$ works well when $\p_T \approx \p_C$.

\subsection{Local Asymptotic Analysis for the CMH Estimator}\label{sec:local}

Local alternative asymptotics (often referred to as Pitman alternatives) {provide} a framework for evaluating the power of a test by treating the truth as a ``moving target'' \citep[][Chapter 14]{Van2000}. When the alternative hypothesis remains fixed{,} the power converges to 1, whereas the local alternative regime defines a sequence of alternatives that drift closer to the null hypothesis at rate $1/\sqrt{n}$, and the power converges to a constant. The motivation is that if the alternatives converge {to} the null at a rate faster than $1/\sqrt{n}$, then the alternative is statistically indistinguishable from the null, and if the rate is slower (for example{,} fixed alternatives), then the problem is ``easy''. In this sense, the local alternative asymptotics approximate a situation where the statistical problem is neither too easy nor too difficult.  
In \cite{azriel2026block}{,} we studied the power of the Cochran-Mantel-Haenszel {test under} local alternatives for blocking designs. Here{,} we extend the results to general balanced designs and study variance estimators.

Under Neyman's nonparametric model{,} the parameters are $\p_T$ and $\p_C$. In the local alternative asymptotics{,} we assume that $\p_T$ and $\p_C$ depend on $n$, which is suppressed in the notation, and
\begin{equation}\label{eq:local_alternative}
\sqrt{n} \tau =\frac{1}{\sqrt{n}}\sum_{i=1}^n(p_{T,i}-p_{C,i}) \longrightarrow \tau_\infty,    
\end{equation}
for a constant $\tau_\infty$ as $n \to \infty$. We also assume that 
\begin{equation}\label{eq:moment_condition}
\frac{1}{n}\| \bseta\|^2 =\frac{1}{4n}\sum_{i=1}^n(p_{T,i}-p_{C,i})^2 \longrightarrow 0    
\end{equation}
as $n \to \infty$. Condition~\ref{eq:moment_condition} implies that not only the average of $\p_T-\p_C$ {converges} to zero as in Condition~\ref{eq:local_alternative}, but also the second moment. It can be violated if{,} for example{,} $p_{T,i}-p_{C,i}=C (-1)^i$ for a constant $C$, {for} $i=1,\ldots,n$. When this condition is violated, the consistency result below does not hold. Notice that $\var{\hat{\tau}}$ is of order $1/n$ and therefore, $V_{CMH}$ is tight if $ n \parens{\expe{V_{CMH}}- \var{\hat{\tau}}}$ converges to zero. 
The following theorem shows that $V_{CMH}$ is tight and consistent under Condition~\ref{eq:moment_condition} and additional conditions. Its proof is found in Section~\ref{app:proof_thm3} of the Supplementary Material.

\begin{theorem}\label{thm:local}
Consider Neyman's nonparametric model and suppose that the design is balanced, that Condition~\ref{eq:moment_condition} holds{,} and that $\lambda_{max}(\bSigmaw)$ is bounded. Then
\[
n \parens{\expe{V_{CMH}}- \var{\hat{\tau}}} \longrightarrow 0 , 
\]
as $n \to \infty$. Moreover, if $\max_{i} \braces{\sum_{j} \left| (\bSigmaw)_{i,j}\right|}$ is bounded, then
\[
\var{n V_{CMH}} \longrightarrow 0 , 
\]
and hence $n V_{CMH}$ is a consistent estimator of $n \var{\hat{\tau}}$.
\end{theorem}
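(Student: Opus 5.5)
The first claim follows directly from Theorem~\ref{thm:exp_var_hat}. We have $n(\expe{V_{CMH}}-\var{\hat{\tau}}) = \frac{4}{n}\bseta^\top \bSigmaw^{\odot 2}\bseta$. The eigenvalues of a Hadamard product of positive semidefinite matrices are bounded by the product of their maximal eigenvalues; this is the Fan--Horn inequality quoted above, and also follows from Schur's bound $\lambda_{max}(A\odot B)\le \max_i A_{ii}\,\lambda_{max}(B)$ with $(\bSigmaw)_{ii}=1$. Hence the difference is at most $\frac{4}{n}\lambda_{max}^2\|\bseta\|^2$, where in fact $\lambda_{max}$ can be used instead of $\lambda_{max}^2$. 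By Condition~\ref{eq:moment_condition} and boundedness of $\lambda_{max}(\bSigmaw)$, this tends to zero.

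For the variance claim, write $nV_{CMH}=\frac{4}{n}\Y^\top\bSigmaw\Y$. The difficulty is that $\Y$ depends on both the potential outcomes and $\W$, so I would use the law of total variance, conditioning on $\W$. Given $\W=\w$, the entries $Y_i$ are independent Bernoulli with means $q_i(\w)=v_i+w_i\eta_i$, where $v_i=(p_{T,i}+p_{C,i})/2$. Hence $\var{nV_{CMH}}=\expe{\cvar{nV_{CMH}}{\W}}+\var{\cexpe{nV_{CMH}}{\W}}$.

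For the conditional variance, $\Y^\top\bSigmaw\Y$ is a quadratic form in independent bounded variables. Centering $\Y=\mathbf q+\boldsymbol\varepsilon$ gives $\Y^\top\bSigmaw\Y=\mathbf q^\top\bSigmaw\mathbf q+2\mathbf q^\top\bSigmaw\boldsymbol\varepsilon+\boldsymbol\varepsilon^\top\bSigmaw\boldsymbol\varepsilon$. The linear term has variance at most $4\|\bSigmaw\mathbf q\|^2\le 4\lambda_{max}^2 n$. For the quadratic term, standard fourth-moment computations for independent centered bounded variables give variance $O(\sum_{i,j}(\bSigmaw)_{ij}^2)=O(\operatorname{tr}\bSigmaw^2)\le O(n\lambda_{max})$. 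Multiplying by $16/n^2$, the conditional variance is $O(1/n)$ using only bounded $\lambda_{max}$.

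The second term is the main obstacle. Here $\cexpe{\Y^\top\bSigmaw\Y}{\W}=\mathbf q^\top\bSigmaw\mathbf q+\sum_i q_i(1-q_i)$, because $(\bSigmaw)_{ii}=1$, and $\mathbf q=\mathbf v+\W\odot\bseta$. Expanding, the $\W$-free part $\mathbf v^\top\bSigmaw\mathbf v$ drops out of the variance. The remaining parts are
\begin{itemize}
\item the cross term $2\mathbf v^\top\bSigmaw(\W\odot\bseta)=2\sum_{i,j}v_i(\bSigmaw)_{ij}\eta_jW_j$,
\item the term $(\W\odot\bseta)^\top\bSigmaw(\W\odot\bseta)$,
\item the diagonal term $\sum_i q_i(1-q_i)$, whose $\W$-dependent part is $\sum_i W_i\eta_i(1-2v_i)$, since $W_i^2=1$.
\end{itemize}
I would bound the variance of each using Cauchy--Schwarz. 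For a linear form $\mathbf a^\top\W$ the variance is $\mathbf a^\top\bSigmaw\mathbf a\le\lambda_{max}\|\mathbf a\|^2$. With $\mathbf a=(\bSigmaw\mathbf v)\odot\bseta$, the row-sum bound $\max_i\sum_j|(\bSigmaw)_{ij}|\le K$ gives $|(\bSigmaw\mathbf v)_j|\le K$, so the variance is at most $\lambda_{max}K^2\|\bseta\|^2=o(n)$. This is exactly where the row-sum assumption is needed. The diagonal term likewise has variance at most $\lambda_{max}\|\bseta\|^2=o(n)$. The middle term is bounded in absolute value by $\lambda_{max}\|\bseta\|^2=o(n)$, so its variance is $o(n^2)$. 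After multiplying by $16/n^2$, all pieces vanish, and the cross-covariances are handled by Cauchy--Schwarz.

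Combining the two parts gives $\var{nV_{CMH}}\to0$. Together with $n\expe{V_{CMH}}-n\var{\hat{\tau}}\to0$ and Chebyshev's inequality, this yields consistency.
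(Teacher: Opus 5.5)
Your proof is correct. The first part is the paper's argument, and your Schur bound $\lambda_{max}(A\odot B)\le\max_i A_{ii}\,\lambda_{max}(B)$ for positive semidefinite $A,B$ even sharpens the paper's $\lambda_{max}^2$ to $\lambda_{max}$.

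The variance part takes a different route. You and the paper both condition on $\W$. The paper then imports the conditional-variance bound from an earlier quadratic-form computation. It treats $\var{\cexpe{\Y^\top\bSigmaw\Y}{\W}}$ by expanding a four-index covariance sum into nine terms, each bounded entrywise via $\bbSigmaw$. Because absolute values destroy positive semidefiniteness, it then needs Gershgorin and $M=\max_i\sum_j|(\bSigmaw)_{i,j}|$ to control $\lambda_{max}(\bbSigmaw)$.

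You instead bound the conditional variance directly using $\tr{\bSigmaw^2}\le n\lambda_{max}$. The substitution $\mathbf{q}=\v+\W\odot\bseta$ then reduces the rest to two linear forms in $\W$ and one quadratic form lying in $[0,\lambda_{max}\|\bseta\|^2]$. All of these are handled by spectral bounds on $\bSigmaw$ itself. This is shorter and self-contained. It also correctly keeps the term $\sum_i W_i\eta_i(1-2v_i)$. The paper drops $\frac1n\sum_i W_i\eta_i$ by citing $\sum_i W_i=0$, which does not make it vanish unless $\bseta$ is constant; this is harmless, since that term's variance is $O(1/n)$.

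Your route in fact gives more than you claim. The row-sum hypothesis is not needed where you invoke it. Since $|\eta_j|\le 1/2$, you get $\|\mathbf{a}\|^2\le\frac14\|\bSigmaw\v\|^2\le\frac14\lambda_{max}^2 n$. So the cross term has variance at most $\lambda_{max}^3 n$ and contributes $O(1/n)$ after the factor $16/n^2$. Thus your decomposition proves $\var{nV_{CMH}}\to 0$ under bounded $\lambda_{max}(\bSigmaw)$ and Condition~\ref{eq:moment_condition} alone. The moment condition is used only for the middle quadratic term.
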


Thus, Theorem \ref{thm:local} shows that $V_{CMH}$ is tight if the matrix $\bSigmaw$ is ``small'' in the sense that the eigenvalues are bounded. For blocking designs{,} the maximal eigenvalue is $\frac{n_B}{n_B-1}$, which is maximized when $n_B=2$ (pairwise matching), and is bounded. On the other hand, if we consider the nearly deterministic design {that chooses} at random between $\{+\w_0,-\w_0\}$, where $\w_0$ is an assignment that minimizes a certain {balancing} criterion \citep{Kallus2018}, then $\bSigmaw$ is of rank 1, and $\lambda_{max} = n$. Thus, for these designs{,}  $\lambda_{max}$ is unbounded and $V_{CMH}$ is not tight.

To prove that $n V_{CMH}$ is a consistent estimator{,} we  {need} a stronger condition; namely, that  $\max_{i} \sum_{j} \left| (\bSigmaw)_{i,j}\right|$ is bounded. This is a stronger condition because

\beqn
\lambda_{max} \le \max_{i} \braces{\sum_{j} \left| (\bSigmaw)_{i,j}\right|}
\eeqn

\noindent by a consequence of the Gershgorin circle theorem \citep[Theorem 1.1]{varga2004gershgorin}. For uniform blocking designs, $\max_{i} \sum_{j} \left| (\bSigmaw)_{i,j}\right|=2$, i.e., the {boundedness} condition is satisfied. Notice, that this bound also applies to non-uniform blocking designs.  

Thus, the conditions of Theorem \ref{thm:local} hold for blocking designs and hence they are asymptotically tight under Condition~\ref{eq:moment_condition}. By Theorem \ref{thm:blocking_balanced}, blocking designs with $B \to \infty$ achieve the minimal variance of $\hat{\tau}$. It follows that these designs are both asymptotically optimal and tight. While Theorem \ref{thm:local} shows that the difference term  
$\frac{1}{n^2}{\boldsymbol \eta}^\top \bSigmaw^{\odot 2} {\boldsymbol \eta}$ is asymptotically negligible, for small or moderate sample size it might be noticeable, as shown in the simulation section. 
As we demonstrated in Section \ref{sec:est_var}, the difference term is large for small $n_B$ (see Equation~\ref{eq:diff.term.blocking}). Therefore, practically, one should aim for small $n_B$ but not too small; see Figure \ref{fig:hormesis}.

If {$\hat{\tau}$ satisfies that} it is asymptotically normal under the asymptotic regime of Conditions~\ref{eq:local_alternative} and \ref{eq:moment_condition}, i.e.,
\begin{equation}\label{eq:asymp_normal}
\frac{\sqrt{n}(\hat{\tau} - \tau)}{\sqrt{n\var{\hat{\tau}}}} \convd N(0,1),     
\end{equation}
then for a design that satisfies the conditions of Theorem \ref{thm:local}, we have{,} by Slutsky's theorem{,} that
\[
\frac{\sqrt{n}(\hat{\tau} - \tau)}{\sqrt{n V_{CMH}}} \convd N(0,1), 
\]
and therefore{,} $\hat{\tau}\pm z_{1-\alpha/2} \sqrt{nV_{CMH}}$ is a valid confidence interval for $\tau$ at level $1-\alpha$. In \cite{azriel2026block}{,} we showed that blocking designs satisfy Equation~\ref{eq:asymp_normal} under Conditions~\ref{eq:local_alternative} and \ref{eq:moment_condition} and additional conditions, and therefore one can construct valid confidence intervals using blocking designs in this asymptotic regime.

\section{Simulations}\label{sec:simulations}

\subsection{Setup}\label{sec:simulations_setup}
 
In our discussed inference procedures, we simulate under a wide range of experimental designs, sample sizes and covariate dimensions and compare four metrics: power, coverage, size and confidence interval length. For each combination of sample size $n \in \{64, 128, 256\}$ and covariate dimension $p \in \{1, 2, 5, 10\}$, an $n \times p$ covariate matrix $\mathbf{X}$ is drawn once from the standard multivariate normal distribution and held \emph{fixed} across all replications in the $n \times p$ cell as we are considering setting (b). Within each cell, we simulate under the following $\W$ designs:

\begin{description}

\item[{BCRD}] This design is described in Section~\ref{sec:introduction}. It is the only design considered in this simulation which is blind to the observed covariates.

\item[{Rerandomization}]
Here, allocations are first drawn from a large pool of $100N_{sim}$ \textbf{BCRD} allocations and then the pool is winnowed to those whose  total Mahalanobis Distance (MD) between treated and control covariate vector averages lies within the lowest $1\%$ of Mahalanobis distances. {Total MD is defined as $(\bar{\x}_{T} - \bar{\x}_{C})^\top \hat{\Sigma}_X^{-1} (\bar{\x}_{T} - \bar{\x}_{C})$ where $\bar{\x}_{T}$ and $\bar{\x}_{C}$ denote the average covariate vector over all $n/2$ treatment subjects and $n/2$ control subjects respectively and $\hat{\bSigma}_X$ is sample variance-covariance matrix over all rows in $\X$.}  The $\w$'s in this design are then drawn randomly from this winnowed pool. Designs like these date back to \citet{Student1938}. See \citet{Morgan2012} for modern theoretical analysis. 

\item[{BinaryMatch}]
Subjects are matched into pairs by first generating a $n \times n$ symmetric distance matrix whose entries are the intra-pair MDs \citep[Section 2.2]{Stuart2010} defined as $d(\x_a,\x_b):= (\x_a-\x_b)^\top \hat{\bSigma}_X^{-1} (\x_a-\x_b)$, where $\x_a,\x_b$ are a pair of subjects and $\hat{\bSigma}_X$ is defined in the previous paragraph. We then use the optimal nonbipartite matching algorithm which finds the globally minimum sum of intra-pair MDs computed via the \texttt{nbpMatching} \textsc{R} package \citep{Beck2024nbpMatching}, which implements the shortest-path algorithm of \citet{Derigs1988}. (For $p=1$, the pairs are generated trivially by sorting the one covariate vector and splicing into groups of two). Within pairs, we randomize T/C or C/T with a fair coin flip to produce draws of $\w$. This design is a special case of blocking with $B = n/2$ and $n_B = 2$. 

\item[{GreedyMD}]
Subjects are matched into pairs using an iterative process beginning with a draw of BCRD. All potential pairs of T,C are switched and the total MD is calculated. The switch that produces the lowest total MD is retained. The iterations continue until total MD is no longer reduced. The final $\w$ is the allocation returned. This greedy method produces very low observed balances and are nearly as random as BCRD \citep{Krieger2019}.

\item[{BinaryMatchThenGreedyMD}]
This hybrid approach first finds the binary match structure via the nonbipartite matching algorithm (see description above about the \textit{BinaryMatch} design). We then draw one $\w$ from the \textit{BinaryMatch} design. We then do the greedy search described in the \textit{GreedyMD} section but only switches within the binary match structure are considered. This greedy method produces allocations with even lower observed balances than \textit{GreedyMD} and are nearly as random as \textit{BinaryMatch} \citep{Krieger2022}.

\item[{Naive $B$}]
This is a naive uniform blocking design where subjects are partitioned into exactly $B$ equally-sized blocks by a greedy covariate-stratification algorithm.  Each subject is assigned a composite stratum key formed by concatenating quantile-bin labels across covariates one column at a time: for each covariate in turn, the algorithm selects the largest number of quantile bins that, when crossed with the bins already determined by prior covariates, keeps the total number of distinct keys at or below $B$ while maintaining equal block sizes.  The process stops as soon as exactly $B$ blocks are formed.  Because covariates are processed sequentially and stratum commitments are irrevocable, earlier covariates dominate the block structure and later ones are used only to subdivide remaining blocks. At high $p$, the algorithm typically exhausts the block budget after one or two covariates and ignores the rest. Treatment is then balanced within each block by using \textit{BCRD} within block to draw $\w$.  We consider $B \in \{2,4,8,16,32,64,128\}$, excluding values of $B$ where $n/B < 2$.

\item[{Optimal $B$}]
This is the optimal uniform blocking design discussed in Section~\ref{sec:setting}. Subjets partitioned into exactly $B$ covariate-homogeneous, equally-sized blocks by minimizing the total MD. This minimization is then subject to the constraint that each block contains exactly $n/B$ subjects.  This is a balanced clustering problem which differs from the standard $K$-means problem, which ignores group-size balance. Here, we must consider only uniform block sizes as the space of possible solutions. We employ the \texttt{anticlust} \textsc{R} package \citep{Papenberg2026}, which uses an iterative $K$-means-style exchange algorithm to converge to a locally optimal partition.  Treatment is then balanced within each block by using \textit{BCRD} within block to draw $\w$. We consider the same values of $B$ as described in the \textit{Naive Blocking} description above. 

\end{description}
  
For each of the above designs, we draw $N_{sim} = 10,000$ iid replicates of $\w$ within each $n \times p$ cell. For each $\w$ replicate, we draw $\y$ (which is deterministic per replicate and hence denoted by lower case $\y$)  using independent Bernoulli draws from the probabilities of positive outcome computed from the logistic-linear model, 

\bneqn\label{eq:dgp}
    Y_{i} \mid \mathbf{x}_{i}, w_{i}
      \;\sim\;
    \mathrm{Bernoulli}\!\left(
      \mathrm{expit}\!\left(\beta_0 + \mathbf{x}_{i}^{\top} \bbeta
        + \beta_{T}\,w_{i}\right)
    \right),
\eneqn
  
\noindent where the covariate weights $\bbeta$ is $p$-dimensional, fixed to be even spaced between -1, +1 and then scaled to satisfy $\|\bbeta\| = 3$. This norm was picked to bump the importance of the covariates in the response, thereby magnifying the performance of designs that balance observed covariates well, which ultimately results in better separation in our results between designs. If $\|\bbeta\|$ were to be much larger, $\p_T$ and $\p_C$ would be pushed to $\onevec_n$ and $\zerovec_n$ respectively. 
This situation would correspond with Setting (a) and our theoretical results would not apply. 
For simplicity, we set $\beta_0 = 0$ but there is still a small intercept term since $\X$ is drawn from a standard normal distribution (however this term is $O(n^{-1/2})$ and negligible for the sample sizes considered). We set $\beta_{T} = 0.5$ when measuring power, coverage and CI length; and we set $\beta_{T} = 0$ when measuring size. Because $\X$ and $\bbeta$ are fixed and thus only vary within each $n \times p$ cell, the ATE (Equation~\ref{eq:ate_parameter}) within each cell is calculated as

\beqn
    \tau = \oneover{n}\sum_{i=1}^{n} 
        \parens{\text{expit}\parens{
          \mathbf{x}_{i}\bbeta+\beta_{T})
          -
          \text{expit}(\mathbf{x}_{i} \bbeta-\beta_T
        }}.
\eeqn
 
For each replication pair, we estimate the ATE via the observed $\ybar_T - \ybar_C$ (Equation~\ref{eq:tau_hat}). Then we calculate three different standard errors estimates ($se$) which will allow us to compare the performance of our three procedures: the CMH (the square root of Equation~\ref{eq:cmh_variance_general} in general but employing Equation~\ref{eq:cmh_variance_blocking} instead when $\W$ is blocking), Robins (the square root of Equation~\ref{eq:robins_var}) and $\sqrt{\ybar_T(1-\ybar_T) + \ybar_C(1-\ybar_C)}(n/2)^{-1/2}$, the standard Wald expression. Notice that Proposition \ref{prop:c} implies that the latter $se$ estimate is realized from a conservative variance estimator, as it is a consistent estimator for the variance of $\tauhat$ when the variance is with respect to both the covariates and potential outcomes. This variance is larger than the expected variance given the covariates. This point is one of the main results we wish to convey.

Equation~\ref{eq:cmh_variance_general} requires a closed form expression for $\bSigmaw$ which we believe is unknown for the 
\textit{GreedyMD} and the \textit{BinaryMatchThenGreedyMD} designs. We can estimate $\bSigmaw$ via the Monte Carlo procedure described in Section~\ref{sec:est_var}. Then, using Equation~\ref{eq:cmh_variance_general}, the per-replicate CMH variance estimate can be computed via

\bneqn\label{eq:monte_carlo_se}
\widehat{\var{\tauhat}} \approx \frac{4}{n^2} \y^\top\parens{\frac{1}{N_{sim}} \sum_{r=1}^{N_{sim}} \w_r \w_r^\top} \y =  \frac{4}{n^2}\frac{1}{N_{sim}}\sum_{r=1}^{N_{sim}} (\y^\top\w_r)^2.
\eneqn

\noindent As $\y$ is drawn anew in each replicate, $\widehat{\var{\tauhat}}$ must be calculated anew per replicate which is computationally expensive. This cost can be mitigated by caching the $N_{sim}$ $\w_r$'s in each $n \times p$ cell as these $\w_r$'s only depend only on the fixed $\X$. The closed form expression for $\bSigmaw$ in the \textit{rerandomization} design is also unknown. In our implementation of the \textit{rerandomization} design, it can be considered a fixed set of $N_{sim}$ vectors. Here we can still use Equation~\ref{eq:monte_carlo_se} and it is no longer approximate.


Ultimately, each $n \times p \times~$inferential method results in one $se$ estimate. Using these $se$ values, we then compute a two-sided
$p$-value for $H_{0}: \tau = 0$ at nominal level $\alpha = 5\%$. We then define our four comparison performance metrics as follows: power is the proportion of times that $p\text{-value} < \alpha$ over all $N_{sim}$ replicates when $\beta_T > 0$, size is the proportion of times that $p\text{-value} < \alpha$ over all $N_{sim}$ replicates when $\beta_T = 0$, coverage is the proportion of times $\tau \in \bracks{\ybar_T - \ybar_C \pm 1.96 \times se}$ over all $N_{sim}$ replicates when $\beta_T > 0$ and CI length is the average value of $2 (1.96 \times se)$ over all $N_{sim}$ replicates when $\beta_T > 0$.

\subsection{Results}\label{sec:simulations_results}

Figure~\ref{fig:power_and_size_p_1} shows the power and size results and Figure~\ref{fig:coverage_and_length_p_1} shows the coverage and confidence interval length results for legal inference when $p=1$, all sample sizes, all designs and all inference methods. We include the analogous figures for $p=5$ and $p=10$ in Section~\ref{app:additional_results} of the Supplementary Information. Figure~\ref{fig:power_and_size_p_1} demonstrates that our CMH procedure overall has both higher power than the Wald and Robins procedures and are more appropriately sized. The Wald procedure is conservative which does not seem to improve with sample size. The Robins procedure is wildly missized --- for small $B$ it is too conservative and for large $B$ it is anti-conservative. The latter is a result of inconsistent estimation of $p_{T,b}$ and $p_{C,b}$ for small blocks, as the estimators are based on a small number of observations (see Equation~\ref{eq:V_Robbins}). Figure~\ref{fig:coverage_and_length_p_1} demonstrates tighter confidence bands for CMH in comparison to Wald and Robins which is expected given the higher power. As for coverage, all inference methods are conservative (except the Robins procedure when $n_B = 2$), but the CMH procedure is in general less conservative with confidence bands most often including the desired $1-\alpha$ target. The additional results for $p > 1$ are similar, although the comparisons between the methods are more compressed.

We provide a clearer illustration of CMH's power and confidence interval length performance gains versus the Wald procedure for $n=64$ and $p=1$ in 
Table~\ref{tab:cmh_gains}. Over the different settings, there is a 54-85\% power gain and 16-21\% reduction in the confidence interval length (save the BCRD setting which has virtually equivalent performance).

We also provide a clearer illustration of the CMH estimator's asymptotic hormetic U-shape performance over $n_B$ for the \textit{Optimal Blocking} design in Figure~\ref{fig:hormesis}. The results are most clearly seen in the $p=1$ case where the blocks are optimal. Within simulation error, the U-shape is also plausible for $p>1$.

\begin{figure}[htp]
    \centering
    \includegraphics[width=1\linewidth]{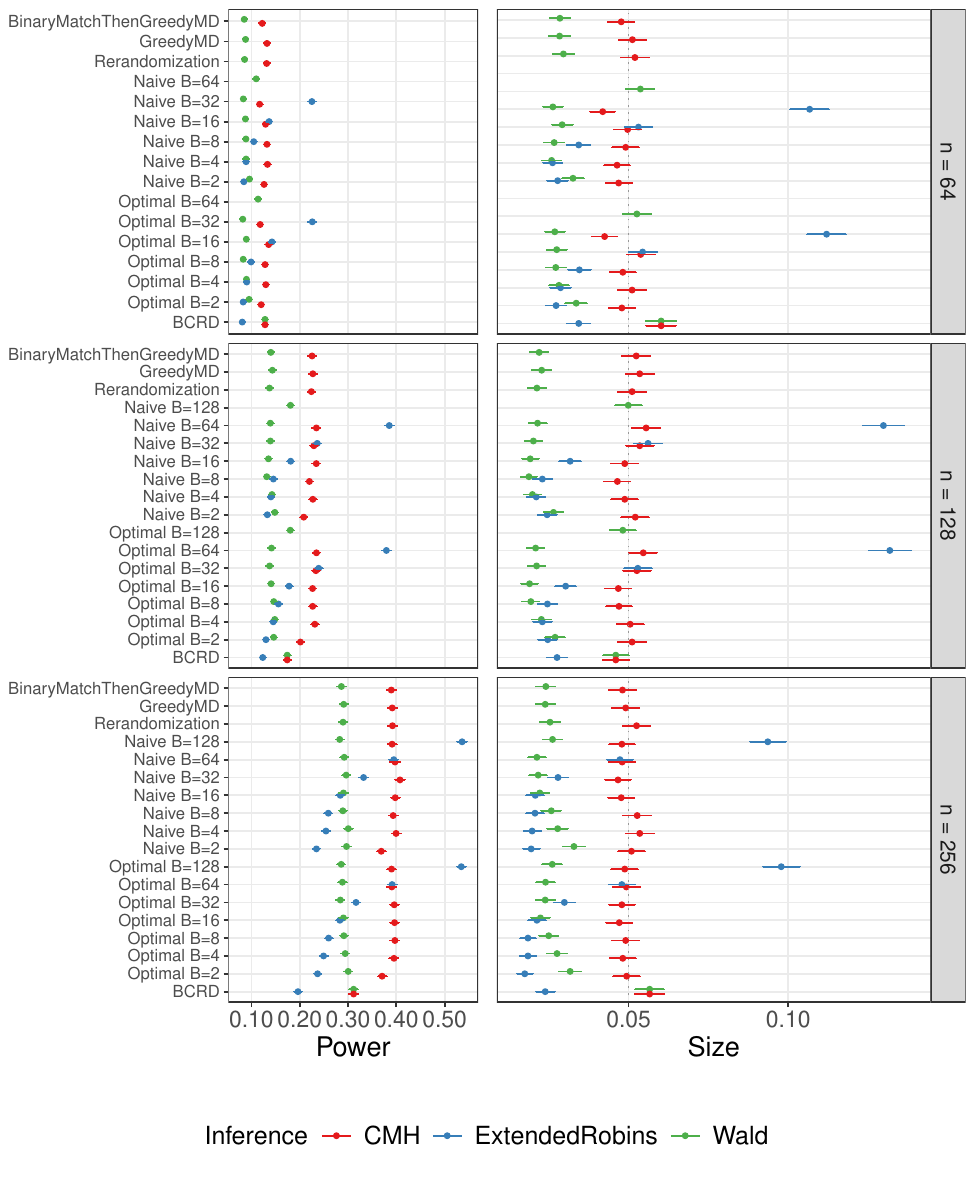}
    \caption{Power and size results for legal inference when $p=1$, all sample sizes, all designs and all three inference methods. Horizontal bars illustrate 95\% confidence bands for the simulation error. \qu{Naive} refers to the \textit{Naive Blocking} design and \qu{Optimal} refers to the \textit{Optimal Blocking} design.}
    \label{fig:power_and_size_p_1}
\end{figure}

\begin{figure}[htp]
    \centering
    \includegraphics[width=1\linewidth]{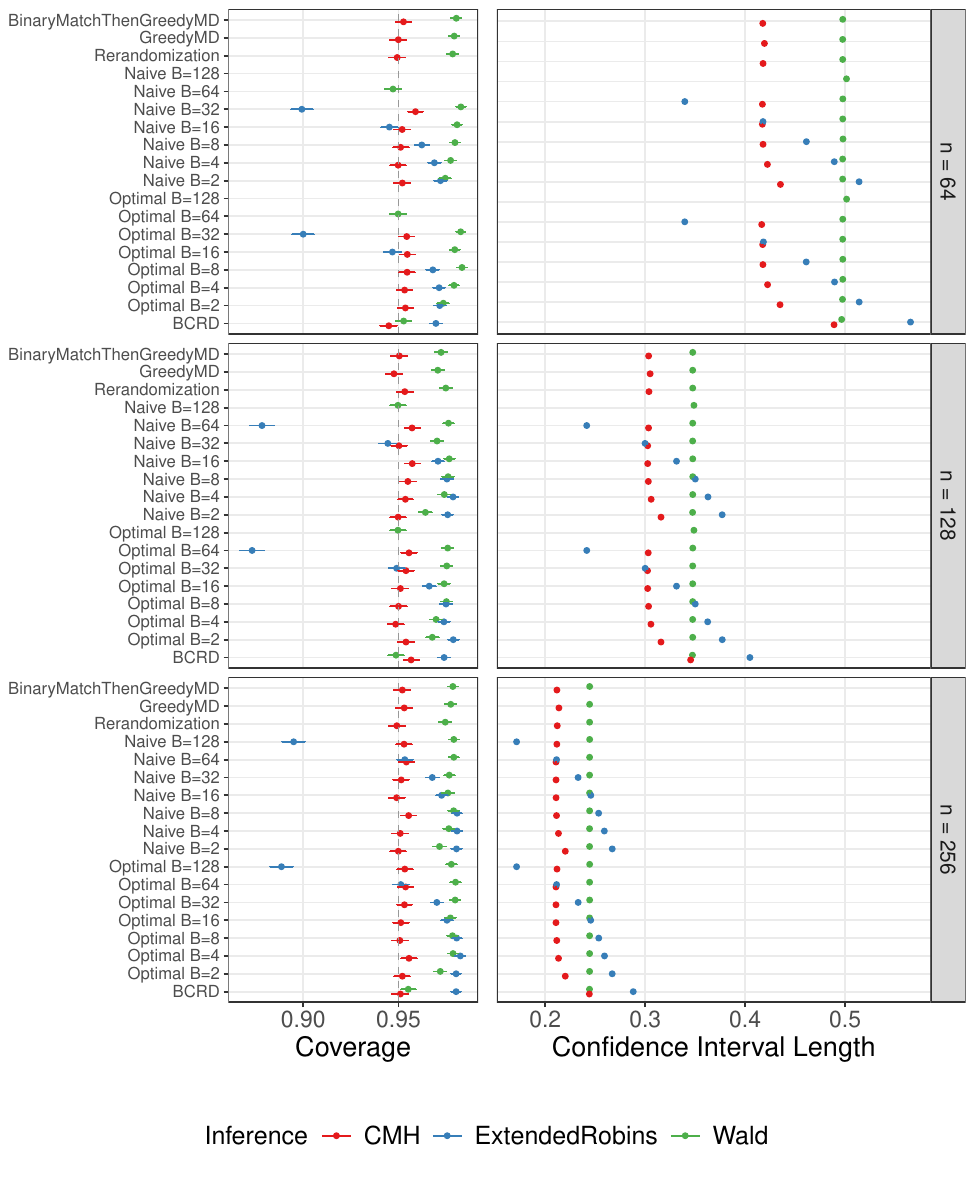}
    \caption{Coverage and confidence interval length results analogous to Figure~\ref{fig:power_and_size_p_1}.}
    \label{fig:coverage_and_length_p_1}
\end{figure}

\begin{table}[h]
\centering
\begin{tabular}{ll|rr|rr}
  \hline
Design & Inference & Power & CI & Power & CI Length \\
 & & & Length & Gain (\%) & Reduction (\%) \\
\hline
 Optimal B=2 & CMH & 11.38 & 0.42 & 53.99 & 16.28 \\ 
  Optimal B=2 & Wald & 7.39 & 0.50 &  &  \\ 
   \hline
Optimal B=4 & CMH & 12.64 & 0.40 & 82.13 & 19.35 \\ 
  Optimal B=4 & Wald & 6.94 & 0.50 &  &  \\ 
   \hline
Optimal B=8 & CMH & 12.29 & 0.40 & 98.87 & 20.58 \\ 
  Optimal B=8 & Wald & 6.18 & 0.50 &  &  \\ 
   \hline
Optimal B=16 & CMH & 12.45 & 0.40 & 99.52 & 20.83 \\ 
  Optimal B=16 & Wald & 6.24 & 0.50 &  &  \\ 
   \hline
Optimal B=32 & CMH & 11.30 & 0.40 & 70.95 & 20.79 \\ 
  Optimal B=32 & Wald & 6.61 & 0.50 &  &  \\ 
   \hline
BinaryMatch & CMH & 10.63 & 0.40 & 75.99 & 20.85 \\ 
  BinaryMatch & Wald & 6.04 & 0.50 &  &  \\ 
   \hline
Rerandomization & CMH & 12.82 & 0.40 & 84.73 & 20.17 \\ 
  Rerandomization & Wald & 6.94 & 0.50 &  &  \\ 
   \hline
GreedyMD & CMH & 12.17 & 0.40 & 84.39 & 19.67 \\ 
  GreedyMD & Wald & 6.60 & 0.50 &  &  \\ 
   \hline
\footnotesize{BinaryMatchThenGreedyMD} & CMH & 11.33 & 0.40 & 84.83 & 20.74 \\ 
  \footnotesize{BinaryMatchThenGreedyMD} & Wald & 6.13 & 0.50 &  &  \\ 
   \hline
BCRD & CMH & 11.07 & 0.49 & 0.00 & 1.64 \\ 
  BCRD & Wald & 11.07 & 0.50 &  &  \\ 
   \hline
 \hline
\end{tabular}
\caption{Results for $n=64,~p=1$ for all legal designs and inference methods. The last two columns illustrate performance gains vis-a-vis the Wald procedure. The \textit{Naive $B$} design is omitted as for $p=1$, the results are statistically equal to the \textit{Optimal $B$} design as they share the same block structure.}
\label{tab:cmh_gains}
\end{table}

\begin{figure}[htp]
    \centering
    \includegraphics[width=0.70\linewidth]{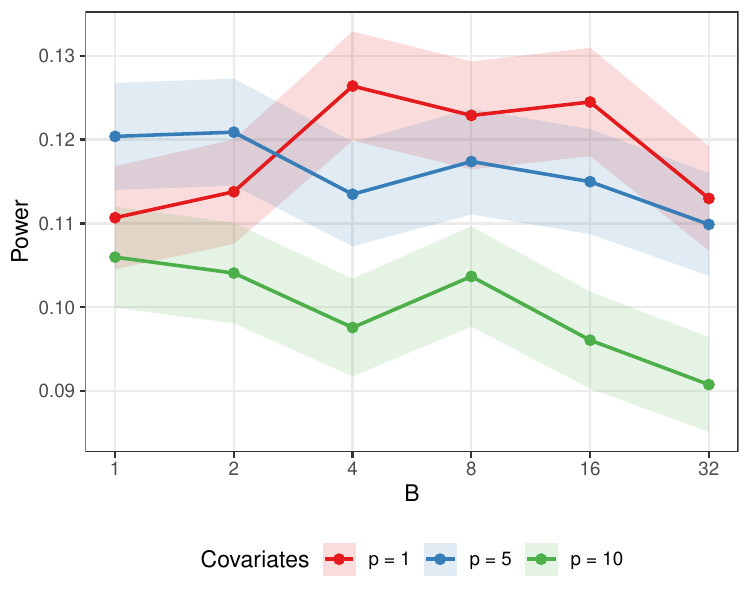}
    \caption{Power performance of the CMH estimator over different block sizes $B$ and different number of covariates in the \textit{Optimal Blocking} design for $n=64$ ($B=1$ is equivalent to the BCRD). The ribbons illustrate 95\% simulation error confidence bands.}
    \label{fig:hormesis}
\end{figure}







\section{Discussion}\label{sec:discussion}

This paper develops a unified framework for design and inference in randomized experiments with binary outcomes under Neyman’s nonparametric model where covariates are fixed and potential outcomes are random. We derive the exact variance of the difference-in-means estimator under balanced designs and show that asymptotic variance optimality is achieved by designs that satisfy a covariate-balance condition, with a broad class of blocking designs meeting this requirement under mild smoothness assumptions on the response probabilities. We also show that unbiased variance estimation is impossible in this setting, and therefore propose conservative alternatives: a CMH-based variance estimator for general balanced designs and an extension of Robins’ estimator for blocking designs. The theoretical results are complemented by simulations showing that the CMH-based procedure is generally appropriately-sized, more powerful than both the Wald-based inference and the extended Robbins one, and yields shorter confidence intervals. Blocking designs reveal a trade-off between covariate balance and design randomness that can produce a U-shaped relationship between performance and block size.

The main takeaway is that, for binary treatment effects under fixed covariates, good experimental design and valid inference cannot be separated: variance reduction depends on covariate balance, but reliable uncertainty quantification also depends on maintaining sufficient randomness in the design. The analysis clarifies that overly restrictive designs may improve balance at the cost of inflating dependence in the assignment mechanism, thereby weakening robustness and potentially harming inference. Conversely, designs that are \qu{too random} such as BCRD preserve robustness but may leave substantial precision on the table. The simulations reinforce this theoretical message by showing that intermediate blocking levels as well as other designs that achieve covariate balance and sufficient degree of randomness, perform best when they are equipped with the CMH-based inference. 

Several important directions remain for future work. First, it would be valuable to extend the results beyond binary responses and develop analogous optimal design and robust inference theory for non-binary outcomes, while carefully identifying the conditions under which the corresponding variance formulas, conservative estimators, and asymptotic tightness results continue to hold. Second, although blocking designs satisfy the boundedness conditions used in the local asymptotic analysis, it would be important to study whether similar conditions hold for broader classes of designs—specifically, to characterize when  $\lambda_{max}$   
and  $\max_{i} \braces{\sum_{j} \left| (\bSigmaw)_{i,j}\right|}$ remain bounded for designs other than blocking. More generally, a deeper understanding of how assignment covariance structure governs the trade-off between balance, randomness, and inferential accuracy could guide the construction of new experimental designs that retain the robustness of randomization while approaching optimal precision.

\bibliographystyle{apalike}
\bibliography{refs}

\clearpage
\appendix
\setcounter{section}{0} 
\addcontentsline{toc}{section}{\protect\numberline{\thesection}Appendix Title}

\begin{center}
    \Large{Supporting Information for} \\
    \large\qu{\ourtitle} \\
    \normalsize by \\
    David Azriel, Abba M. Krieger and Adam Kapelner
\end{center}

\section{Proofs}

\subsection{Computation of the Variance under Setting (c)}\label{app:variance_computation}

{\bf Note:} In this subsection{,} $\x_1,\ldots,\x_n$ are considered random variables (unlike the rest of the paper) and expectations and variances are with respect to the $\x_i$'s as well.

\begin{proposition}\label{prop:c}
	Suppose that  $\{(Y_{T,i},Y_{C,i},\x_i)\}_{i=1,\ldots,n}$ are i.i.d. from some distribution. Consider a balanced design (as defined at the beginning of Section \ref{sec:setting}){,} then
	\[
	\var{\hat{\tau}}=\frac{2}{n}\parens{p_T(1-p_T)+p_C(1-p_C)},
	\]
	where ${p}_T:=\expe{Y_{T,i}}$ and ${p}_C:=\expe{Y_{C,i}}$. 
\end{proposition}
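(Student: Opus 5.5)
We will use the law of total variance, conditioning on the covariates $\X=(\x_1,\ldots,\x_n)$. Since the covariates are now random, the quantities $p_{T,i}=\cexpe{Y_{T,i}}{\x_i}$ and $p_{C,i}=\cexpe{Y_{C,i}}{\x_i}$ are random. Given $\X$, we are exactly in Setting (b), so Equation~\ref{eq:var} applies, with the caveat discussed in the next paragraph. We then write
\[
\var{\hat{\tau}}=\expe{\cvar{\hat\tau}{\X}}+\var{\cexpe{\hat\tau}{\X}}.
\]
By unbiasedness under a balanced design, $\cexpe{\hat\tau}{\X}=\frac1n\sum_i(p_{T,i}-p_{C,i})$. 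This is an average of $n$ i.i.d.\ terms, so its variance is $\frac1n\var{p_{T,1}-p_{C,1}}$.

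For the first term, we plug in Equation~\ref{eq:var}. Care is needed about which quantities the design may depend on. In Setting (c) the design may depend on $\X$, so $\bSigmaw$ becomes $\cexpe{\W\W^\top}{\X}$. The cleanest route is the following. First, write $\hat\tau=\frac{2}{n}\W^\top\Y$. Second, use Equation~\ref{eq:Y_obs} to decompose
\[
\frac{2}{n}\W^\top\Y=\frac1n\W^\top(\Y_T+\Y_C)+\frac1n\onevec^\top(\Y_T-\Y_C),
\]
which uses $w_i^2=1$. Third, compute the variance of this decomposition directly, averaging over everything. The second piece is an average of i.i.d.\ variables and has variance $\frac1n\var{Y_{T,1}-Y_{C,1}}$.

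For the first piece, we condition on the potential outcomes and covariates. Its conditional mean is $0$, since $\expe{\W}=\zerovec$ holds given $\X$ because the design is equally likely. Its second moment is $\frac{1}{n^2}\expe{(\Y_T+\Y_C)^\top\bSigmaw(\Y_T+\Y_C)}$. Writing $\mathbf{S}=\Y_T+\Y_C$, the vector $\mathbf{S}$ has i.i.d.\ entries with mean $\mu=p_T+p_C$ and variance $\sigma^2$, and $\mathbf{S}$ is independent of $\W$ given $\X$. The key is that the i.i.d.\ structure should make the within-$\X$ dependence disappear after averaging, so that
\[
\expe{\mathbf S^\top\bSigmaw\mathbf S}=\sigma^2\,\tr{\bSigmaw}+\mu^2\,\onevec^\top\bSigmaw\onevec .
\]
Balance gives $\W^\top\onevec=0$, hence $\onevec^\top\bSigmaw\onevec=0$, and $\tr{\bSigmaw}=n$. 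The cross term between the two pieces vanishes because, conditionally on the outcomes, the first piece has mean zero.

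The main obstacle is exactly this step when the design depends on $\X$. In that case $\mathbf S$ and $\bSigmaw(\X)$ are dependent, so the expectation does not factor. To handle it, we center $\mathbf S=\mu\onevec+(\mathbf S-\mu\onevec)$. The $\mu\onevec$ part is annihilated by $\W^\top\onevec=0$. For the remaining part, we work conditionally on $\X$ and use $\W^\top\mathbf{S}=\W^\top(\mathbf S-\mathbf{m}(\X))+\W^\top\mathbf m(\X)$, with $\mathbf m_i=p_{T,i}+p_{C,i}$. The fluctuation around the conditional mean contributes $\frac{1}{n^2}\sum_i\expe{\cvar{S_i}{\x_i}}$. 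The term $\W^\top\mathbf m(\X)$ contributes $\frac{1}{n^2}\expe{\mathbf m^\top\bSigmaw\mathbf m}$, which is nonnegative and can be nonzero. So the claimed identity requires that $\expe{(\W^\top\mathbf m)^2}$ combine with the other terms into $\frac{2}{n}(p_T(1-p_T)+p_C(1-p_C))$. I expect this to hold if the design is blind to $\X$ (or, more weakly, if $\W$ is exchangeable relative to the i.i.d.\ units), because then $\expe{\mathbf m^\top\bSigmaw\mathbf m}=\var{m_1}\,n$ by the same trace computation.

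With the blind or exchangeable interpretation, the bookkeeping gives
\[
\frac{1}{n^2}\,n\,\var{S_1}+\frac1n\var{Y_{T,1}-Y_{C,1}}=\frac1n\bigl(2\var{Y_{T,1}}+2\var{Y_{C,1}}\bigr),
\]
because $\var{Y_T+Y_C}+\var{Y_T-Y_C}=2\var{Y_T}+2\var{Y_C}$. This cancellation removes all dependence on the joint law of $(Y_T,Y_C)$. Since $\var{Y_{T,1}}=p_T(1-p_T)$ and $\var{Y_{C,1}}=p_C(1-p_C)$, the right-hand side equals $\frac{2}{n}(p_T(1-p_T)+p_C(1-p_C))$, which is the claim.
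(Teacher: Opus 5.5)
Your argument is correct under the assumption you end up making explicit, namely that $\W$ is independent of $(\Y_T,\Y_C,\X)$. The paper's proof uses the same assumption tacitly, but it takes a different route. It conditions on $\W$. Since $\sum_i W_i=0$, the conditional mean $\cexpe{\hat{\tau}}{\W}=\frac{1}{n}\sum_i W_i(p_T+p_C)+p_T-p_C$ is constant. Given $\W$, the $Y_i$ are independent Bernoulli variables, $n/2$ with variance $p_T(1-p_T)$ and $n/2$ with variance $p_C(1-p_C)$, so the formula follows in two lines.

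You instead split $\hat{\tau}$ into the uncorrelated pieces $\frac{1}{n}\W^\top(\Y_T+\Y_C)$ and $\frac{1}{n}\onevec^\top(\Y_T-\Y_C)$. You evaluate the first through $\sigma^2\tr{\bSigmaw}+\mu^2\onevec^\top\bSigmaw\onevec$ and recombine using $\var{Y_T+Y_C}+\var{Y_T-Y_C}=2\var{Y_T}+2\var{Y_C}$. Your route is longer, but it shows exactly which features of the design matter: the unit diagonal and $\onevec^\top\bSigmaw\onevec=0$.

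Redone conditionally on $\X$, your route also exposes what conditioning on $\W$ hides. Write $\bSigmaw(\X):=\cexpe{\W\W^\top}{\X}$. For a design that uses $\X$, $\var{\hat{\tau}}$ equals the stated value plus $\frac{1}{n^2}\expe{(\p_T+\p_C)^\top\bSigmaw(\X)(\p_T+\p_C)}-\frac{1}{n}\var{p_{T,1}+p_{C,1}}$. This correction is typically negative for covariate-balancing designs. For example, for blocking with $B\to\infty$ under the hypotheses of Theorem~\ref{thm:blocking_balanced}, the first term is $o(1/n)$. So the proposition holds only for covariate-blind designs, and the same goes for the introduction's remark that the design is irrelevant in Setting (c). The tacit step in the paper's proof is that, given $\W$, the $Y_i$ are independent with means $p_T$ or $p_C$, and this can fail once $\W$ depends on $\X$.

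One correction: drop the parenthetical ``more weakly, if $\W$ is exchangeable relative to the i.i.d.\ units''. If it means that the joint law of $(\X,\W)$ is invariant under relabeling units, it does not rescue the trace computation, because $\bSigmaw(\X)$ stays correlated with $\p_T+\p_C$. Pairwise matching on $\X$ has this invariance. Yet with $n=4$, scalar $x_i\sim U(0,1)$ and $p_{T,i}=p_{C,i}=x_i$, it gives $\var{\hat{\tau}}=1/5$ instead of the stated $1/4$.
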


{\bf Proof:}
First{,} notice that unconditionally,
\[
\expe{Y_{T,i}}=\expe{\cexpe{Y_{T,i}}{\x_i}}=\expe{p_{T,i}}=:p_T,
\]
and similarly, $\expe{Y_{C,i}}=:p_C$. Also,
\begin{multline*}
	\var{Y_{T,i}}
	=\expe{\cvar{Y_{T,i}}{\x_i}}+\var{\cexpe{Y_{T,i}}{\x_i}}
	=\expe{p_{T,i}(1-p_{T,i})}+\var{p_{T,i}}\\
	=\expe{p_{T,i}}-\expe{p^2_{T,i}}+\expe{p^2_{T,i}}-\parens{\expe{p_{T,i}}}^2
	=p_T(1-p_T).    
\end{multline*}
and similarly{,} $\var{Y_{C,i}}=p_C(1-p_C)$.

Now,
\begin{equation}\label{eq:var_uncond}
	\var{\hat{\tau}}
	=\expe{\cvar{\hat{\tau}}{\W}}+\var{\cexpe{\hat{\tau}}{\W}}.    
\end{equation}
We have that
\[
\cexpe{\hat{\tau}}{\W}
=\frac{1}{n}\sum_{i=1}^n W_i (p_T +p_C)+p_T -p_C.
\]
Recall that $\sum_{i=1}^n W_i=0$ and therefore $\cexpe{\hat{\tau}}{\W}$ is a constant{,} and hence 
$\var{\cexpe{\hat{\tau}}{\W}}=0$. Continuing with the first term in Equation~\ref{eq:var_uncond}, notice that conditional on $\W$, the $Y_i$'s are independent and therefore,

\beqn
	\expe{\cvar{\hat{\tau}}{\W}}
	&=&\frac{4}{n^2}\sum_{i=1}^n\expe{I(W_i=1) \var{Y_{T,i}}+I(W_i=-1) \var{Y_{C,i}} }\\
	&=&\frac{2}{n}\parens{p_T(1-p_T)+p_C(1-p_C)}.
\eeqn

\noindent Summing up,
\[
\var{\hat{\tau}}=\frac{2}{n}\parens{p_T(1-p_T)+p_C(1-p_C)}.
\]
\qed

\subsection{Proof of Theorem \ref{thm:blocking_balanced}}\label{app:proof_thm_1}

We assume with out loss of generality that the Lipschitz constant is 1 and that the support of the $\x_i$'s is contained in the p-dimensional unit cube. We have that,
\[
\frac{1}{n}(\p_T+\p_C)^\top \bSigmaw (\p_T+\p_C)=\frac{4}{B}\sum_{b=1}^B \frac{1}{n_B-1} \sum_{i \in I_b}(v_i-\bar{v}_b)^2,
\]
and by the Lipschitz assumption,
\begin{equation}\label{eq:need.to.show}
\frac{4}{B}\sum_{b=1}^B \frac{1}{n_B-1} \sum_{i \in I_b}(v_i-\bar{v}_b)^2 \le \frac{4}{B}\sum_{b=1}^B \frac{1}{n_B-1} \sum_{i \in I_b}\|\x_i-\bar{\x}_b\|^2,    
\end{equation}
where $\bar{\x}_b:=\frac{1}{n_B}\sum_{i \in I_b} \x_i$. The optimal blocking design minimizes the right-hand side of Equation~\ref{eq:need.to.show}. As in  {in} \citet[Section A.6]{Kapelner2025} we shall prove the result by considering a sub-optimal blocking algorithm, which is easy to analyze, and yields an upper bound to the right-hand side of Equation~\ref{eq:need.to.show}. To this end, we define Algorithm~\ref{alg:suboptimal_matching} below. It partitions the $p$-dimensional unit cube into $m$ equally-sized sub-cubes where $m:=m_0^p$ and $m_0:= \floor{B^{1/2p}}$. Thus, $m\approx \sqrt{B}$, and there are much more blocks than sub-cubes. Observations are assigned to the same block if they are in the same sub-cube. An \qu{overflow} group  are assigned randomly.

\begin{algorithm}
\caption{A suboptimal blocking algorithm}
\label{alg:suboptimal_matching}
~\\\textbf{Step 1}\\
Let $m_0:= \floor{B^{1/2p}}$, where $\floor{a}$ denotes the largest integer smaller than $a$, and let $m:=m_0^p$. The $p$-dimensional unit cube is partitioned into $m$ equally-sized sub-cubes, such that the edges of the sub-cubes are of length $1/m_0$. \\

\textbf{Step 2}\\
Based on the $m$ sub-cubes, $\x_i$'s are assigned into blocks such that  $i_1,i_2$ are in the same block if $x_{i_1.j}$ and $x_{i_2,j}$ are in the same sub-cube. (This implies that individuals in the same block have similar values for all the $p$ covariates.)\\


\textbf{Step 3}\\
Collect an \qu{overflow} group corresponding to all individuals who are not matched because they belong to a cell where the number of subjects after step 2 is smaller than $n_B$. 
Randomly assign all the overflow group into blocks.
\end{algorithm}

Let $\tilde{I}_b$ denote the indexes in $I_b$ that were assigned in Step 2 of Algorithm A, and $\tilde{I}_b^c:=I_b \setminus \tilde{I}_b$; thus, indexes in $\tilde{I}_b^c$ belong to the overflow group of Step 3. Let $\tilde{\x}_b$ denote the average of $\x_i$'s for $i \in \tilde{I}_b$, i.e., $\tilde{\x}_b:=\frac{1}{|\tilde{I}_b|}\sum_{i \in \tilde{I}_b} \x_i$. Because the average minimizes the ${\cal L}_2$ distance, we have that $\sum_{i \in I_b}\|\x_i-\bar{\x}_b\|^2\le \sum_{i \in I_b}\|\x_i-\tilde{\x}_b\|^2$. Therefore, going back to Equation~\ref{eq:need.to.show},
\[
\begin{aligned}
\frac{4}{B}\sum_{b=1}^B \frac{1}{n_B-1} \sum_{i \in I_b}\|\x_i-\bar{\x}_b\|^2 &\le 
\frac{4}{B}\sum_{b=1}^B \frac{1}{n_B-1} \sum_{i \in I_b}\|\x_i-\tilde{\x}_b\|^2\\
&=\frac{4}{B}\sum_{b=1}^B \frac{1}{n_B-1}\parens{\sum_{i \in \tilde{I}_b}\|\x_i-\tilde{\x}_b\|^2+\sum_{i \in \tilde{I}_b^c}\|\x_i-\tilde{\x}_b\|^2}.
\end{aligned}
\]
For indexes $i$ in $\tilde{I}_b$, $|x_{i,j}-x_{b,j}|\le \frac{1}{m_0}$ for all $j=1,\ldots,p$, by construction. Also, $|\tilde{I}_b|\le n_B$ because $n_B$ is the block size. Thus, $\sum_{i \in I_b}\|\x_i-\tilde{\x}_b\|^2 \le \frac{p n_B}{m_0^2}$. 
For indexes $i$ in $\tilde{I}_b^c$, $\|\x_i-\tilde{\x}_b\|^2 \le p$ since the $\x_i$'s are in the p-dimensional unit cube. Further, the size of the overflow group is bounded by $m (n_B-1)$, because for each sub-cube there are at most $n_B-1$ observations that are not matched at Step 2 of Algorithm A. Hence, 
$\sum_{b=1}^B \sum_{i \in \tilde{I}_c} \|\x_i-\tilde{\x}_b\|^2  \le p m(n_B-1)$. It follows that,
\[
\frac{4}{B}\sum_{b=1}^B \frac{1}{n_B-1}\parens{\sum_{i \in \tilde{I}_b}\|\x_i-\tilde{\x}_b\|^2+\sum_{i \in \tilde{I}_b^c}\|\x_i-\tilde{\x}_b\|^2}
\le \frac{4}{B}\sum_{b=1}^B \frac{pn_B}{m_0^2(n_B-1)} +4p \frac{m}{B}\le 8p\parens{\frac{1}{m_0^2}+\frac{m}{B}}.
\]
We have that $B^{1/2p} -1 \le m_0 \le B^{1/2p}$ and recall that $m=m_0^p$. Hence,
\[
\frac{1}{m_0^2}+\frac{m}{B} \le \frac{1}{(B^{1/2p} -1)^2}+\frac{1}{\sqrt{B}},
\]
which converges to zero as $B\to \infty$. \qed

\subsection{Proof of Theorem \ref{thm:exp_var_hat}}\label{app:thm_2_proof}

We have that
\[
\expe{ \Y^\top \bSigmaw \Y}
= \left( 
\sum_{i_1 \ne i_2} \expe{Y_{i_1} Y_{i_2}} (\bSigmaw)_{i_1,i_2}
+\sum_i \expe{Y_{i}^2}  
\right)
\]
because the diagonal elements of $\bSigmaw$ are equal to 1.
Now,
\[
\expe{Y_{i_1} Y_{i_2}}
= \expe{\cexpe{Y_{i_1} Y_{i_2}}{\W}}.
\]
Given $\W$, $Y_{i_1}$ and $Y_{i_2}$ are independent{,} and Equation~\ref{eq:Y_obs} implies that 
$\cexpe{Y_i}{W_{i}}={v_i + W_{i} \eta_i}$,  
where $v_i:=\frac{p_{T,i}+p_{C,i}}{2}$ and $\eta_i:=\frac{p_{T,i}-p_{C,i}}{2}$ and{,} recall{,} $\expe{W_i}=0$. Therefore,
\[
\expe{\cexpe{Y_{i_1} Y_{i_2}}{\W}}
= v_{i_1} v_{i_2}
+ \expe{W_{i_1} W_{i_2}}\eta_{i_1} \eta_{i_2}
= v_{i_1} v_{i_2} + (\bSigmaw)_{i_1,i_2} {\eta_{i_1} \eta_{i_2}}.
\]
Also, since $Y_i \in \{0,1\}$,
\[
\expe{Y_i^2}= \expe{Y_i}= \expe{\cexpe{Y_i}{W_{B,i}}}=v_i.
\]

Hence,
\begin{multline}\label{eq:termmm}
\expe{ \Y^\top \bSigmaw \Y}
=  \sum_{i_1 \ne i_2} v_{i_1} v_{i_2} (\bSigmaw)_{i_1,i_2}
+\sum_i v_i  
+ \sum_{i_1 \ne i_2} \eta_{i_1} \eta_{i_2} (\bSigmaw)_{i_1,i_2}^2 \\
=  \v^\top \bSigmaw \v 
	+ \v^\top(\onevec-\v) 
	+ {\boldsymbol \eta}^\top \bSigmaw^{\odot 2} {\boldsymbol \eta} 
	-  \normsq{\bseta},
\end{multline}
where $\bSigmaw^{\odot 2}:=\bSigmaw \odot \bSigmaw$ and $\odot$ denotes element-wise product.

Now, by straightforward algebra we have
\begin{equation*}\label{eq:v(1-v)}
v_i(1-v_i) 
= \frac{1}{2} [p_{T,i}(1-p_{T,i})+p_{C,i}(1-p_{C,i})]
	+\frac{(p_{T,i}-p_{C,i})^2 }{4}
=\frac{1}{2} [p_{T,i}(1-p_{T,i})+p_{C,i}(1-p_{C,i})]+\eta_i^2.
\end{equation*}
Therefore,
\[
\v^\top(\onevec-\v)  -  \normsq{\bseta}
=\frac{1}{2}[\p_T^\top(\onevec -\p_T) + \p_C^\top(\onevec -\p_C)].
\]
Also,
\[
\v^\top \bSigmaw \v
=\frac{1}{4} (\p_T+\p_C)^\top \bSigmaw (\p_T+\p_C).
\]
Therefore, Equation~\ref{eq:termmm} implies that
\begin{equation*}
\begin{aligned}
\expe{V_{CMH}}
&=\frac{4}{n^2}\expe{ \Y^\top \bSigmaw \Y}\\
&=
\frac{1}{n^2}\parens{
(\p_T+\p_C)^\top \bSigmaw (\p_T+\p_C)
+2 [\p_T^\top(\onevec -\p_T) + \p_C^\top(\onevec -\p_C)]
+4  {\boldsymbol \eta}^\top \bSigmaw^{\odot 2} {\boldsymbol \eta}
}\\
&=\var{\hat{\tau}}
+ \frac{4}{n^2}  {\boldsymbol \eta}^\top \bSigmaw^{\odot 2} {\boldsymbol \eta}, 
\end{aligned}
\end{equation*}
where the last equality is due to Equation~\ref{eq:var}.
Schur product theorem implies that $\bSigmaw^{\odot 2}$ is positive semi-definite because $\bSigmaw$ is positive semi-definite. Hence, $\expe{V_{CMH}}\ge \var{\hat{\tau}}$.
\qed

\subsection{Proof of Proposition \ref{prop:BCRD}}\label{app:proof_prop_1}

Under BCRD{,} the extended Robins estimator for the variance is
\begin{multline}\label{eq:V_Robbins_adj}
V_{Robbins-ext}
=V_{Robbins} + \frac{1}{n} \left( \hat{p}_T(1-\hat{p}_T)+ \hat{p}_C(1-\hat{p}_C)\right)\\
=\frac{m_1(1-m_1)}{n/2}
+\frac{m_0(1-m_0)}{n/2}
+\frac{(2m_0-m_1)(1-m_1)-m_0(1-m_0)}{n}
+\frac{m_1(1-m_1)}{n}
+\frac{m_0(1-m_0)}{n}\\
= \frac{2}{n}\left[ {m_1(1-m_1)}+{m_0(1-m_0)} + m_0(1-m_1)\right]\\
= \frac{2}{n}\left[  \hat{p}_T(1-\hat{p}_T)+ \hat{p}_C(1-\hat{p}_C) + m_0(1-m_1)\right].
\end{multline}

The variance of the CMH is
\[
V_{CMH}
=\frac{4}{n^2}\Y^\top \bSigmaw \Y 
= \frac{4}{n(n-1)}\sum_{i=1}^n(Y_i - \bar{Y})^2
=\frac{4}{n-1} \bar{Y}(1-\bar{Y}),
\]
where the last equality holds because $Y_i^2=Y_i$ and hence 
$\frac{1}{n}\sum_{i=1}^n(Y_i - \bar{Y})^2=\bar{Y}(1-\bar{Y})$.
In terms of $\hat{p}_T$ and $\hat{p}_C$, we have
\[
V_{CMH}
=\frac{4}{n}\frac{1}{n-1}\sum_{i=1}^n(Y_i - \bar{Y})^2
=\frac{4}{n-1}\,\frac{\hat{p}_T+\hat{p}_C}{2}
\left(1- \frac{\hat{p}_T+\hat{p}_C}{2}\right).
\]
Let $\rho:=\frac{\hat{p}_T+\hat{p}_C}{2}$.  
As in Equation~\ref{eq:v(1-v)}, we have
\[
\rho(1-\rho)
=\frac{1}{2}\!\parens{\hat{p}_T(1-\hat{p}_T)+\hat{p}_C(1-\hat{p}_C)}
+\frac{(\hat{p}_T-\hat{p}_C)^2}{4}.
\]
Therefore,
\begin{equation}\label{eq:V_CMH}
V_{CMH}
=\frac{2}{n-1}\!\left[\hat{p}_T(1-\hat{p}_T)+ \hat{p}_C(1-\hat{p}_C)\right]
+ \frac{ (\hat{p}_T - \hat{p}_C)^2}{n-1}.
\end{equation}
\qed

\subsection{Proof of Theorem \ref{thm:local}}\label{app:proof_thm3}

By Theorem \ref{thm:exp_var_hat},
\[
n \parens{\expe{V_{CMH}}- \var{\hat{\tau}}}=\frac{4}{n}{\boldsymbol \eta}^\top \bSigmaw^{\odot 2} {\boldsymbol \eta}
\]
and by the Fan-Horn inequality \citep[][page 312]{Horn_Johnson_1991},
\[
{\boldsymbol \eta}^\top \bSigmaw^{\odot 2} {\boldsymbol \eta} \le \lambda_{max}^2(\bSigmaw) \| {\boldsymbol \eta}\|^2. 
\]
It follows that
\[
n \parens{\expe{V_{CMH}}- \var{\hat{\tau}}} \le 4 \lambda_{max}^2(\bSigmaw)\frac{1}{n} \| {\boldsymbol \eta}\|^2.
\]
We assumed that $\frac{1}{n} \| {\boldsymbol \eta}\|^2 \to 0$ (Equation~\ref{eq:moment_condition}) and that $\lambda_{max}(\bSigmaw)$ is bounded, hence, $n \parens{\expe{V_{CMH}}- \var{\hat{\tau}}}$ converges to zero.

To prove the moreover part, we bound $\var{n V_{CMH}}$.
We have that 
\[
\var{2 n V_{CMH}}=\var{ \frac{1}{n} \Y^\top \bSigmaw \Y} = \expe{\cvar{ \frac{1}{n} \Y^\top \bSigmaw \Y}{\W}} + \var{\cexpe{ \frac{1}{n} \Y^\top \bSigmaw \Y}{\W}}.
\]
Given $\W$, the $Y_i$'s are independent. In a previous work \citep{Azriel2024} we have calculated $\var{\Z^\top \bSigmaw \Z}$ where $\Z$ has independent entries but not identically distributed (with mean zero). Based on these computations (see Eq. (32) in \citet{Azriel2024}) it can be shown that $\var{ \frac{1}{n} \Y^\top \bSigmaw | \W} \le C/n$ for a constant $C$ when the moments of the $Y_i$'s and  $\frac{1}{n} \tr{\bSigmaw^2}$ are bounded, which is the case in our setting since $Y_i \in \{0,1\}$ and $\frac{1}{n} \tr{\bSigmaw^2} \le \lambda_{max}^2(\bSigmaw)$, which we assumed is bounded.

Consider now $ \var{\cexpe{ \frac{1}{n} \Y^\top \bSigmaw \Y}{\W}}$. Because $\cexpe{Y_i}{W_{i}}={v_i + W_{i} \eta_i}$ (recall that $v_i:=\frac{p_{T,i}+p_{C,i}}{2}$ and $\eta_i:=\frac{p_{T,i}-p_{C,i}}{2}$) and $Y_i^2=Y_i$, then
\begin{equation*}
\begin{aligned}
\cexpe{ \frac{1}{n} \Y^\top \bSigmaw \Y}{\W}&= \frac{1}{n} \sum_{i_1 \ne i_2} (v_{i_1}+W_{i_1} \eta_{i_1}) (v_{i_2}+W_{i_2} \eta_{i_2})+  \frac{1}{n}\sum_{i}(v_i + W_{i} \eta_i)\\
&= \frac{1}{n}\sum_{i_1 \ne i_2} (v_{i_1}+W_{i_1} \eta_{i_1}) (v_{i_2}+W_{i_2} \eta_{i_2})+  \frac{1}{n}\sum_{i} v_i,    
\end{aligned}
\end{equation*}
where the last equation is true because $\sum_{i=1}^n W_i=0$.
It follows that
\begin{multline*}
 \var{\cexpe{ \frac{1}{n} \Y^\top \bSigmaw \Y}{\W}}\\
 =\frac{1}{n^2} \sum_{i_1 \ne i_2,i_3 \ne i_4}
 \Big\{ \cov{ (v_{i_1}+W_{i_1} \eta_{i_1})
(v_{i_2}+W_{i_2} \eta_{i_2})}{(v_{i_3}+W_{i_3} \eta_{i_3}) (v_{i_4}+W_{i_4} \eta_{i_4})}\\  (\bSigmaw)_{i_1,i_2} (\bSigmaw)_{i_3,i_4}\Big\}.
\end{multline*}
Notice that the case we excluded $i_1=i_2,i_3=i_4$ is non-negative because
\[
\begin{aligned}
&\sum_{i_1,i_3}   \cov{ (v_{i_1}+W_{i_1} \eta_{i_1})^2
}{(v_{i_3}+W_{i_3} \eta_{i_3})^2} (\bSigmaw)_{i_1,i_1} (\bSigmaw)_{i_3,i_3}\\
&= \sum_{i_1,i_3}   \cov{ v_{i_1}^2 +2 v_{i_1} \eta_{i_1} W_{i_1}+ \eta_{i_1}^2
}{ v_{i_3}^2 +2 v_{i_3} \eta_{i_3} W_{i_3}+ \eta_{i_3}^2}\\
&= 4 \sum_{i_1,i_3} v_{i_1} \eta_{i_1}  v_{i_3} \eta_{i_3} \cov{  W_{i_1}
}{ W_{i_3}}=4 {\boldsymbol{\xi}}^\top \bSigmaw {\boldsymbol{\xi}} \ge 0,
\end{aligned}
\]
where $\xi_i=v_i \eta_i$; the term is non-negative because $\bSigmaw$ is positive semi-definite. It follows that
\begin{multline}\label{eq:Var_E}
 \var{\cexpe{ \frac{1}{n} \Y^\top \bSigmaw \Y}{\W}}\\
 \le \frac{1}{n^2} \sum_{i_1,i_2,i_3,i_4}
 \Big\{ \cov{ (v_{i_1}+W_{i_1} \eta_{i_1})
(v_{i_2}+W_{i_2} \eta_{i_2})}{(v_{i_3}+W_{i_3} \eta_{i_3}) (v_{i_4}+W_{i_4} \eta_{i_4})}\\  (\bSigmaw)_{i_1,i_2} (\bSigmaw)_{i_3,i_4}\Big\}.
\end{multline}
Let $\bbSigmaw$ and $\bar{\bseta}$ denote the entry-wise absolute value of $\bSigmaw$ and ${\bseta}$, respectively, and let $M:=\max_{i} \sum_{j} (\bbSigmaw)_{i,j}$. By the Gershgorin circle theorem \citep[Theorem 1.1]{varga2004gershgorin}, $\lambda_{max}(\bbSigmaw)\le M$. Notice also that $0 \le v_i \le 1$, $|\eta_i| \le 1$ for all $i$, and that $\left|(\bbSigmaw)_{i,j}\right|\le 1$ for all $i,j$, and that $|\cov{W_{i_1}^{j_1} W_{i_2}^{j_2}}{W_{i_3}^{j_3} W_{i_4}^{j_4}}| \le 2$ for all $i_1,i_2,i_3,i_4$ and for all non-negative integers $j_1,j_2,j_3,j_4$.
We now show that the nine terms of the right-hand side of Equation~\ref{eq:Var_E} converge to zero:
\begin{itemize}
\item I:
\[
\begin{aligned}
& \frac{1}{n^2} \sum_{i_1 ,i_2,i_3 , i_4} \cov{ v_{i_1}
W_{i_2} \eta_{i_2}}{v_{i_3} W_{i_4} \eta_{i_4}}(\bSigmaw)_{i_1,i_2} (\bSigmaw)_{i_3,i_4}\\
&= \frac{1}{n^2} \sum_{i_1 ,i_2,i_3 , i_4} v_{i_1} \eta_{i_2} v_{i_3} \eta_{i_4}(\bSigmaw)_{i_2,i_4} (\bSigmaw)_{i_1,i_2} (\bSigmaw)_{i_3,i_4}\\
&\le \frac{1}{n^2} \sum_{i_1 ,i_2,i_3 , i_4}  \bar{\eta}_{i_2} \bar{\eta}_{i_4} (\bbSigmaw)_{i_2,i_4}  (\bbSigmaw)_{i_3,i_4}\\
&\le \frac{M}{n} \sum_{i_2, i_4}  \bar{\eta}_{i_2} \bar{\eta}_{i_4} (\bbSigmaw)_{i_2,i_4} \\
&=\frac{M}{n} \bar{\bseta}^\top \bbSigmaw \bar{\bseta}\\
&\le M \lambda_{max}(\bbSigmaw) \frac{1}{n}\| \bseta\|^2\\
&\le M^2 \frac{1}{n}\| \bseta\|^2,
\end{aligned}
\]
which converges to zero due to Equation~\ref{eq:moment_condition} and boundness of $M$.
\item II:
\[
\frac{1}{n^2} \sum_{i_1 ,i_2,i_3 , i_4} \cov{ v_{i_1}
W_{i_2} \eta_{i_2}}{ W_{i_3} \eta_{i_3}v_{i_4}}(\bSigmaw)_{i_1,i_2} (\bSigmaw)_{i_3,i_4}.
\]
Converges to zero by the same argument as I.
\item III:
\[
\begin{aligned}
& \frac{1}{n^2} \sum_{i_1 ,i_2,i_3 , i_4} \cov{ v_{i_1}
W_{i_2} \eta_{i_2}}{W_{i_3} \eta_{i_3} W_{i_4} \eta_{i_4}}(\bSigmaw)_{i_1,i_2} (\bSigmaw)_{i_3,i_4}\\
&= \frac{1}{n^2} \sum_{i_1 ,i_2,i_3 , i_4} v_{i_1} \eta_{i_2} \eta_{i_3} \eta_{i_4} \cov{ 
W_{i_2}}{W_{i_3} W_{i_4} } (\bSigmaw)_{i_1,i_2} (\bSigmaw)_{i_3,i_4}\\
&\le \frac{1}{n^2} \sum_{i_1 ,i_2,i_3 , i_4}  \bar{\eta}_{i_3} \bar{\eta}_{i_4} (\bbSigmaw)_{i_1,i_2}  (\bbSigmaw)_{i_3,i_4}\\
&\le \frac{M}{n} \sum_{i_3, i_4}  \bar{\eta}_{i_3} \bar{\eta}_{i_4} (\bbSigmaw)_{i_3,i_4} \\
&=\frac{M}{n} \bar{\bseta}^\top \bbSigmaw \bar{\bseta}\\
&\le M \lambda_{max}(\bbSigmaw) \frac{1}{n}\| \bseta\|^2\\
&\le M^2 \frac{1}{n}\| \bseta\|^2,
\end{aligned}
\]
which converges to zero due to Equation~\ref{eq:moment_condition} and boundness of $M$.
\item IV:
\[
 \frac{1}{n^2} \sum_{i_1 ,i_2,i_3 , i_4} \cov{ W_{i_1} \eta_{i_1} v_{i_2}
}{v_{i_3} W_{i_4} \eta_{i_4}}(\bSigmaw)_{i_1,i_2} (\bSigmaw)_{i_3,i_4}.
\]
Converges to zero by the same argument as I.
\item V:
\[
 \frac{1}{n^2} \sum_{i_1 ,i_2,i_3 , i_4} \cov{ W_{i_1} \eta_{i_1} v_{i_2}
}{W_{i_3} \eta_{i_3} v_{i_4}}(\bSigmaw)_{i_1,i_2} (\bSigmaw)_{i_3,i_4}.
\]
Converges to zero by the same argument as I.
\item VI:
\[
 \frac{1}{n^2} \sum_{i_1 ,i_2,i_3 , i_4} \cov{ W_{i_1} \eta_{i_1} v_{i_2}
}{W_{i_3} \eta_{i_3} W_{i_4} \eta_{i_4}}(\bSigmaw)_{i_1,i_2} (\bSigmaw)_{i_3,i_4}.
\]
Converges to zero by the same argument as III.
\item VII:
\[
 \frac{1}{n^2} \sum_{i_1 ,i_2,i_3 , i_4} \cov{ W_{i_1} \eta_{i_1}  W_{i_2} \eta_{i_2}
}{v_{i_3} W_{i_4} \eta_{i_4}}(\bSigmaw)_{i_1,i_2} (\bSigmaw)_{i_3,i_4}.
\]
Converges to zero by the same argument as III.
\item VIII:
\[
 \frac{1}{n^2} \sum_{i_1 ,i_2,i_3 , i_4} \cov{ W_{i_1} \eta_{i_1}  W_{i_2} \eta_{i_2}
}{W_{i_3} \eta_{i_3} v_{i_4} }(\bSigmaw)_{i_1,i_2} (\bSigmaw)_{i_3,i_4}.
\]
Converges to zero by the same argument as III.
\item IX:
\[
\begin{aligned}
& \frac{1}{n^2} \sum_{i_1 ,i_2,i_3 , i_4} \cov{ W_{i_1} \eta_{i_1}
W_{i_2} \eta_{i_2}}{W_{i_3} \eta_{i_3} W_{i_4} \eta_{i_4}}(\bSigmaw)_{i_1,i_2} (\bSigmaw)_{i_3,i_4}\\
&= \frac{1}{n^2} \sum_{i_1 ,i_2,i_3 , i_4} \eta_{i_1} \eta_{i_2} \eta_{i_3} \eta_{i_4} \cov{W_{i_1} 
W_{i_2}}{W_{i_3} W_{i_4} } (\bSigmaw)_{i_1,i_2} (\bSigmaw)_{i_3,i_4}
\\
&\le \frac{2}{n^2} \sum_{i_1 ,i_2,i_3 , i_4} \bar{\eta}_{i_1} \bar{\eta}_{i_2} \bar{\eta}_{i_3} \bar{\eta}_{i_4}(\bbSigmaw)_{i_1,i_2} (\bbSigmaw)_{i_3,i_4}\\
&=\frac{2}{n^2} \parens{\bar{\bseta}^\top \bbSigmaw \bar{\bseta}}^2\\
&\le 2\lambda_{max}(\bbSigmaw)^2 \parens{\frac{1}{n}\| \bseta\|^2}^2\\
&\le 2 M^2 \parens{\frac{1}{n}\| \bseta\|^2}^2,
\end{aligned}
\]
which converges to zero due to Equation~\ref{eq:moment_condition} and boundness of $M$.
\end{itemize}
The proof of the theorem is thus concluded.

\qed

\section{Additional Simulation Results}\label{app:additional_results}

\begin{figure}[htp]
    \centering
    \includegraphics[width=1\linewidth]{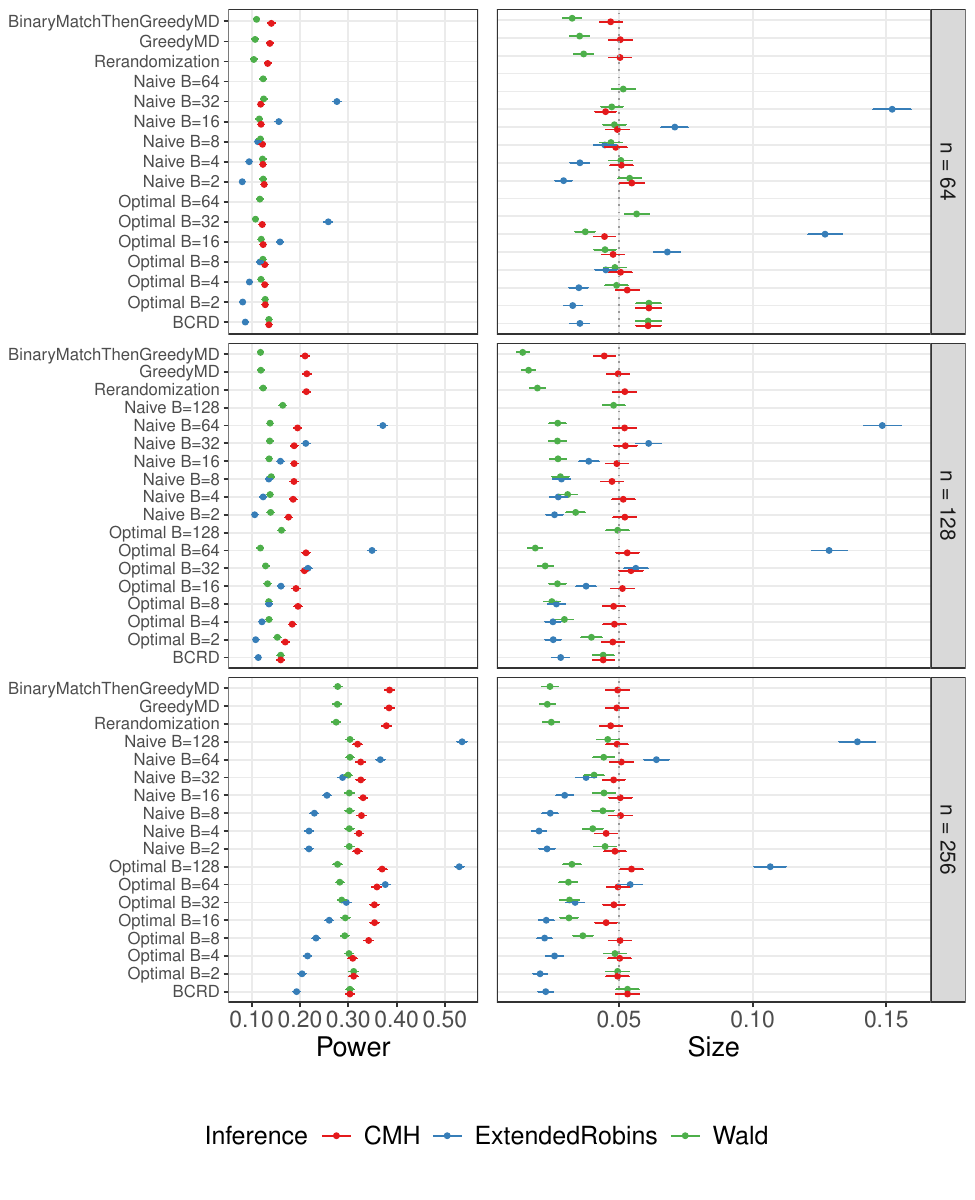}
    \caption{Power and size results for legal inference when $p=5$.}
    \label{fig:power_and_size_p_5}
\end{figure}

\begin{figure}[htp]
    \centering
    \includegraphics[width=1\linewidth]{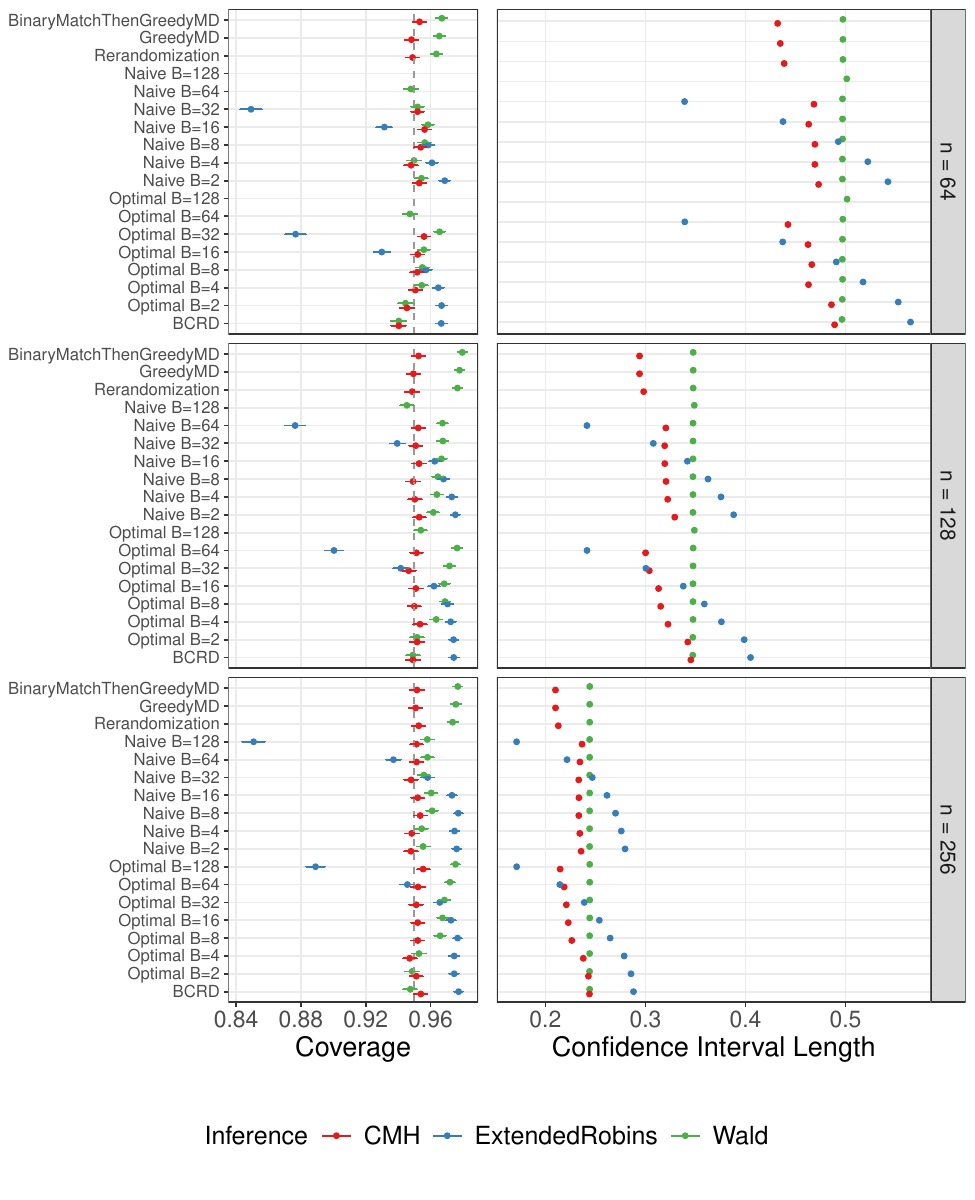}
    \caption{Coverage and confidence interval length results for legal inference when $p=5$.}
    \label{fig:coverage_and_length_p_5}
\end{figure}

\begin{figure}[htp]
    \centering
    \includegraphics[width=1\linewidth]{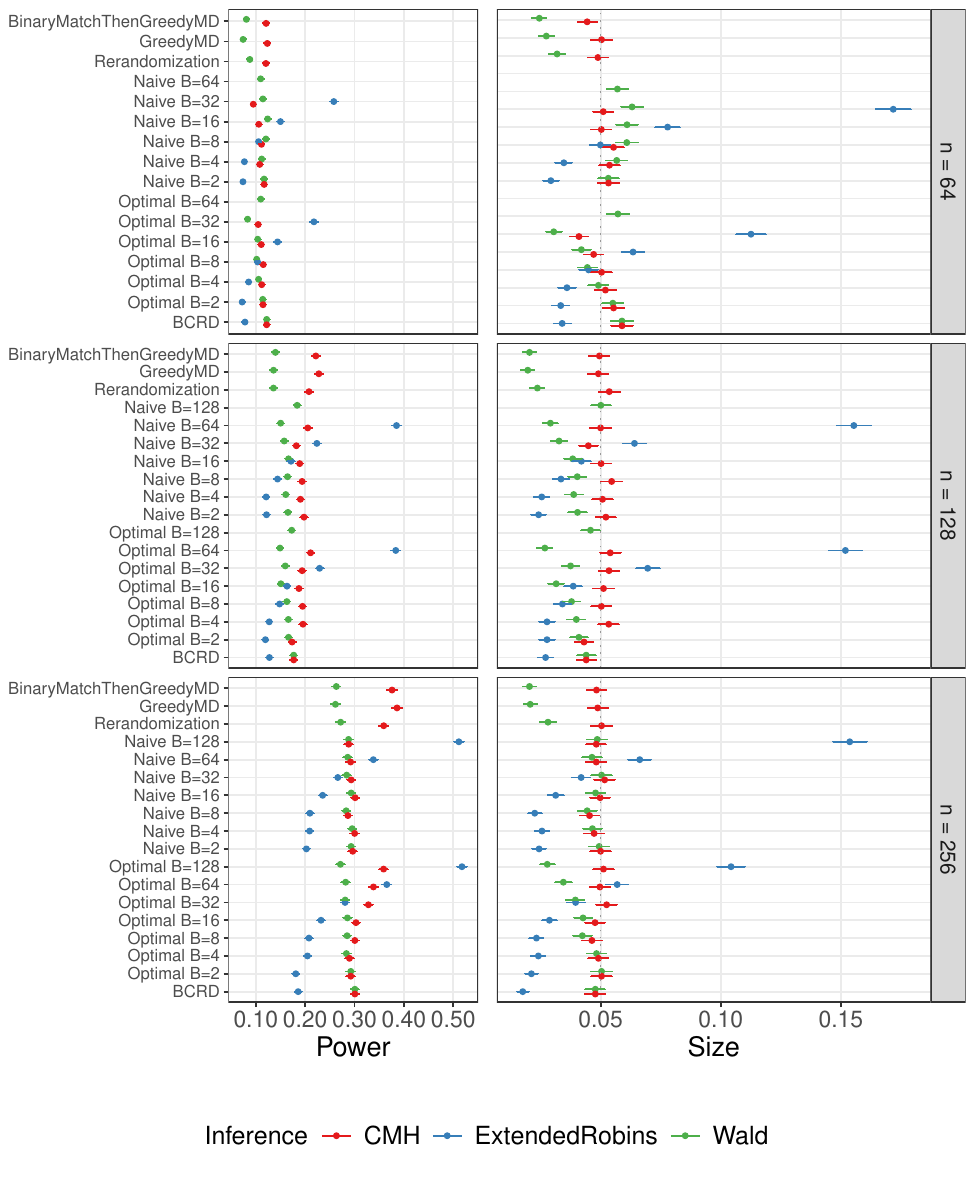}
    \caption{Power and size results for legal inference when $p=10$.}
    \label{fig:power_and_size_p_10}
\end{figure}

\begin{figure}[htp]
    \centering
    \includegraphics[width=1\linewidth]{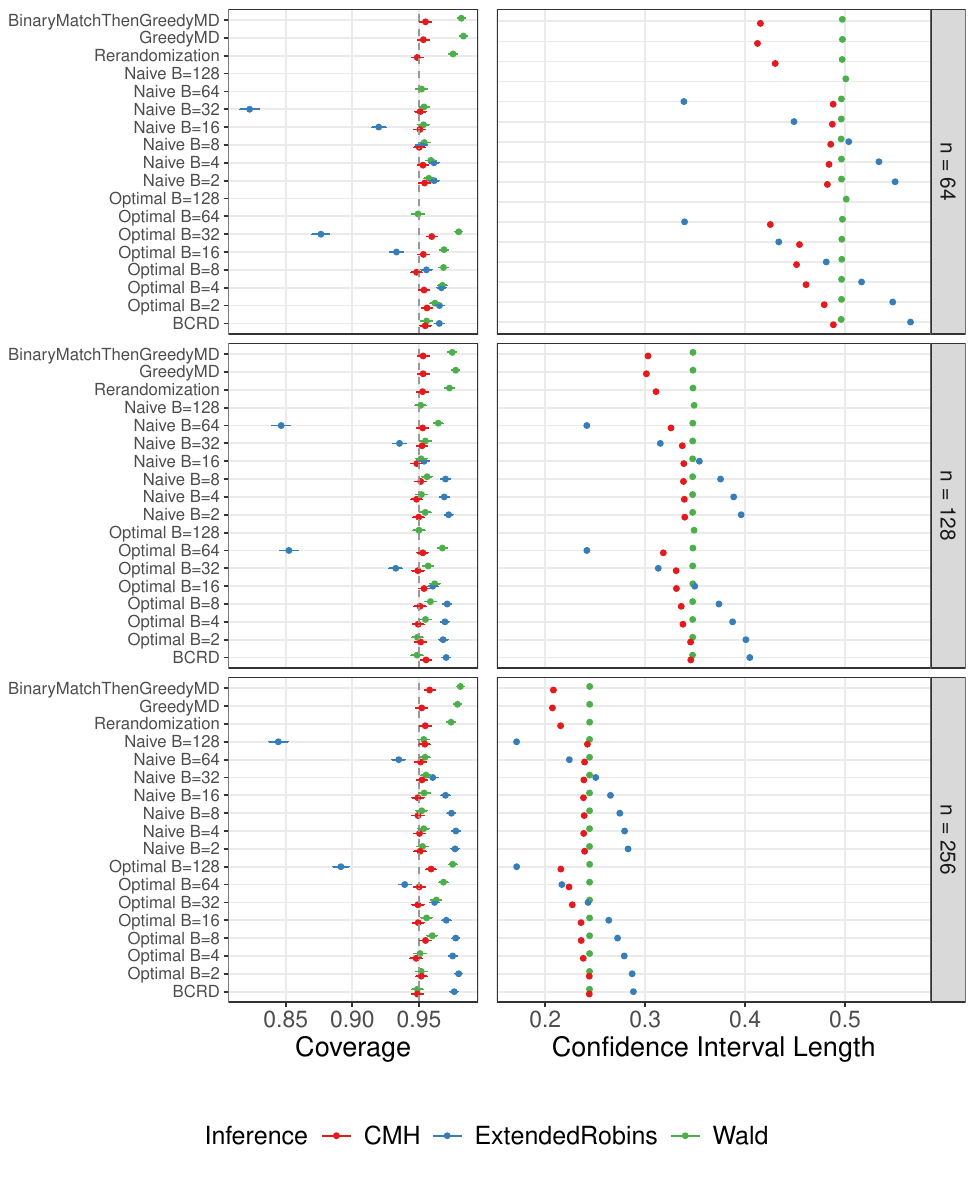}
    \caption{Coverage and confidence interval length results for legal inference when $p=10$.}
    \label{fig:coverage_and_length_p_10}
\end{figure}

\end{document}

%% file: preamble.tex
\usepackage{lmodern}
\usepackage{graphicx}
\usepackage{subcaption}
\usepackage{amsmath}
\usepackage{dsfont}
\usepackage{placeins}
\usepackage{multirow}
\usepackage{tabularray}
\usepackage{hyperref}
\usepackage{enumerate}

\usepackage{xcolor}
\usepackage{algpseudocode}
\usepackage{amsthm}
\usepackage{hyperref}
\usepackage{mathrsfs}
\usepackage{amsfonts}
\usepackage{bbding}
\usepackage{listings}
\usepackage{appendix}
\usepackage{fancyvrb}
\usepackage{cancel}
\usepackage{algorithm}

\newcounter{probnum}
\allowdisplaybreaks

\definecolor{tabblue}{rgb}{.870588,.905882,.94902}
\definecolor{gray}{rgb}{0.7,0.7,0.7}
\definecolor{black}{rgb}{0,0,0}
\definecolor{white}{rgb}{1,1,1}
\definecolor{blue}{rgb}{0.0,0.0,1}

\definecolor{green}{rgb}{0,0.5,0}

\definecolor{yellow}{rgb}{1,0.549,0}

\definecolor{red}{rgb}{0.6,0.0,0.0}

\definecolor{darkred}{rgb}{0.9,0.4,0}

\definecolor{purple}{rgb}{0.58,0,0.827}

\definecolor{backgcode}{rgb}{0.97,0.97,0.8}
\definecolor{Brown}{cmyk}{0,0.81,1,0.60}
\definecolor{OliveGreen}{cmyk}{0.64,0,0.95,0.40}
\definecolor{CadetBlue}{cmyk}{0.62,0.57,0.23,0}

\newcommand{\qu}[1]{``{#1}''}
\newcommand{\bv}[1]{\boldsymbol{#1}}

\newcommand{\bSigma}{\bv{\Sigma}}

\newcommand{\bSigmaw}{\bv{\Sigma}_{\W}}
\newcommand{\bbSigmaw}{\bar{\bv{\Sigma}}_{\W}}

\newcommand{\bseta}{\boldsymbol{\eta}}

\newcommand{\ybar}{\bar{y}}
\newcommand{\Ybar}{\bar{Y}}

\newcommand{\Z}{\bv{Z}}
\newcommand{\X}{\bv{X}}

\newcommand{\p}{\bv{p}}

\newcommand{\Y}{\bv{Y}}
\newcommand{\W}{\bv{W}}

\newcommand{\x}{\bv{x}}

\newcommand{\w}{\bv{w}}

\newcommand{\tauhat}{\hat{\tau}}

\newcommand{\onevec}{\bv{1}}
\newcommand{\zerovec}{\bv{0}}

\newcommand{\y}{\bv{y}}

\renewcommand{\v}{\bv{v}}

\newcommand{\bbeta}{\bv{\beta}}

\newcommand{\reals}{\mathbb{R}}

\newcommand{\floor}[1]{\left\lfloor #1 \right\rfloor}

\newcommand{\beqn}{\vspace{-0.25cm}\begin{eqnarray*}}
\newcommand{\eeqn}{\end{eqnarray*}}
\newcommand{\bneqn}{\vspace{-0.25cm}\begin{eqnarray}}
\newcommand{\eneqn}{\end{eqnarray}}
\newcommand{\benum}{\begin{enumerate}}
\newcommand{\eenum}{\end{enumerate}}

\newcommand{\parens}[1]{\left(#1\right)}

\newcommand{\prob}[1]{\mathbb{P}\parens{#1}}

\newcommand{\bracks}[1]{\left[#1\right]}
\newcommand{\braces}[1]{\left\{#1\right\}}

\newcommand{\norm}[1]{\left|\left|#1\right|\right|}
\newcommand{\normsq}[1]{\norm{#1}^2}

\newcommand{\expe}[1]{\mathbb{E}\bracks{#1}}

\newcommand{\cexpe}[2]{\expe{#1\,|\,#2}}

\newcommand{\cvar}[2]{\var{#1\,|\,#2}}

\newcommand{\var}[1]{\mathbb{V}\text{ar}\bracks{#1}}

\newcommand{\expenostr}[1]{\mathbb{E}[#1]}

\newcommand{\cov}[2]{\mathbb{C}\text{ov}\bracks{#1,\,#2}}

\newcommand{\oneover}[1]{\frac{1}{#1}}

\newcommand{\convd}{~{\buildrel d \over \longrightarrow}~}



%% file: refs.bib
@article{Krieger2022,
  title={Better experimental design by hybridizing binary matching with imbalance optimization},
  author={Krieger, Abba M and Azriel, David A and Kapelner, Adam},
  journal={Canadian Journal of Statistics},
  year={2022},
  doi={10.1002/cjs.11685}
}

@article{Kapelner2023,
  title={The role of pairwise matching in experimental design for an incidence outcome},
  author={Kapelner, Adam and Krieger, Abba M and Azriel, David},
  journal={Australian \& New Zealand Journal of Statistics},
  volume={65},
  number={4},
  pages={379--393},
  year={2023}
}

@article{Morgan2012,
  title={Rerandomization to improve covariate balance in experiments},
  author={Morgan, K L and Rubin, D B},
  journal={The Annals of Statistics},
  pages={1263--1282},
  year={2012}
}

@article{Kallus2018,
  title={Optimal a priori balance in the design of controlled experiments},
  author={Kallus, Nathan},
  journal={Journal of the Royal Statistical Society: Series B (Statistical Methodology)},
  volume={80},
  number={1},
  pages={85--112},
  year={2018},
  publisher={Wiley Online Library}
}

@article{Student1938,
  title={Comparison between balanced and random arrangements of field plots},
  author={Student},
  journal={Biometrika},
  pages={363--378},
  year={1938},
  publisher={JSTOR}
}

@article{Freedman2008,
author = {Freedman, D A},
journal = {Advances in Applied Mathematics},
month = {feb},
number = {2},
pages = {180--193},
title = {{On regression adjustments to experimental data}},
volume = {40},
year = {2008}
}

@article{Li2016,
  title={Asymptotic Theory of Rerandomization in Treatment-Control Experiments},
  author={Li, X and Ding, P and Rubin, D B},
  journal={arXiv preprint arXiv:1604.00698},
  year={2016}
}

@article{Krieger2019,
  title={Nearly random designs with greatly improved balance},
  author={Krieger, Abba M and Azriel, David and Kapelner, Adam},
  journal={Biometrika},
  volume={106},
  number={3},
  pages={695--701},
  year={2019}
}

@article{Stuart2010,
  title={Matching methods for causal inference: A review and a look forward},
  author={Stuart, Elizabeth A},
  journal={Statistical science},
  volume={25},
  number={1},
  pages={1},
  year={2010}
}

@book{Imbens2015,
  title={Causal inference in statistics, social, and biomedical sciences},
  author={Imbens, Guido W and Rubin, Donald B},
  year={2015},
  publisher={Cambridge University Press}
}

@article{Kapelner2021,
  title={Harmonizing Optimized Designs with Classic Randomization in Experiments},
  author={Kapelner, Adam and Krieger, Abba M and Sklar, Michael and Shalit, Uri and Azriel, David},
  journal={The American Statistician},
  volume={75},
  number={2},
  pages={195--206},
  year={2021},
  publisher={Taylor \& Francis}
}

@article{Cochran1954,
  title={Some methods for strengthening the common $\chi^2$ tests},
  author={Cochran, William G},
  journal={Biometrics},
  volume={10},
  number={4},
  pages={417--451},
  year={1954},
  publisher={JSTOR}
}

@article{Mantel1959,
  title={Statistical aspects of the analysis of data from retrospective studies of disease},
  author={Mantel, Nathan and Haenszel, William},
  journal={Journal of the national cancer institute},
  volume={22},
  number={4},
  pages={719--748},
  year={1959},
  publisher={Oxford University Press}
}

@misc{Azriel2024,
      title={The Optimality of Blocking Designs in Equally and Unequally Allocated Randomized Experiments with General Response}, 
      author={David Azriel and Abba M. Krieger and Adam Kapelner},
      year={2024},
      journal={arXiv preprint arXiv:2212.01887},
      url={https://arxiv.org/abs/2212.01887}
}

@article{Kapelner2025,
  title={The Pairwise Matching Design Is Optimal Under Extreme Noise and Extreme Assignments},
  author={Kapelner, Adam and Krieger, Abba M and Azriel, David},
  journal={Stat},
  volume={14},
  number={2},
  pages={e70058},
  year={2025},
  publisher={Wiley Online Library}
}

@article{Robins1988,
  title={Confidence intervals for causal parameters},
  author={Robins, James M},
  journal={Statistics in medicine},
  volume={7},
  number={7},
  pages={773--785},
  year={1988},
  publisher={Wiley Online Library}
}

@incollection{Sekhon2008,
  author    = {Sekhon, Jasjeet S.},
  title     = {The {Neyman--Rubin} Model of Causal Inference and Estimation via Matching Methods},
  booktitle = {The Oxford Handbook of Political Methodology},
  editor    = {Box-Steffensmeier, Janet M. and Brady, Henry E. and Collier, David},
  publisher = {Oxford University Press},
  address   = {New York},
  year      = {2008},
  pages     = {271--299},
  doi       = {10.1093/oxfordhb/9780199286546.003.0011},
  isbn      = {9780199286546}
}

@article{Rigdon2015,
  title={Randomization inference for treatment effects on a binary outcome},
  author={Rigdon, Joseph and Hudgens, Michael G},
  journal={Statistics in medicine},
  volume={34},
  number={6},
  pages={924--935},
  year={2015},
  publisher={Wiley Online Library}
}

@inproceedings{Aronow2023,
  title={Fast computation of exact confidence intervals for randomized experiments with binary outcomes},
  author={Aronow, PM and Chang, Haoge and Lopatto, Patrick},
  booktitle={Proceedings of the 24th ACM Conference on Economics and Computation},
  pages={120--120},
  year={2023}
}

@article{Li2025exact,
  title={Exact and Conservative Inference for the Average Treatment Effect in Stratified Experiments with Binary Outcomes},
  author={Li, Jiaxun and Spertus, Jacob and Stark, Philip B},
  journal={arXiv preprint arXiv:2508.03834},
  year={2025}
}

@article{Bai2024,
  title={A primer on the analysis of randomized experiments and a survey of some recent advances},
  author={Bai, Yuehao and Shaikh, Azeem M and Tabord-Meehan, Max},
  journal={arXiv preprint arXiv:2405.03910},
  year={2024}
}

@InProceedings{Malinen2014,
author="Malinen, Mikko I.
and Fr{\"a}nti, Pasi",
editor="Fr{\"a}nti, Pasi
and Brown, Gavin
and Loog, Marco
and Escolano, Francisco
and Pelillo, Marcello",
title="Balanced K-Means for Clustering",
booktitle="Structural, Syntactic, and Statistical Pattern Recognition",
year="2014",
publisher="Springer Berlin Heidelberg",
address="Berlin, Heidelberg",
pages="32--41",
isbn="978-3-662-44415-3"
}

@article{Edmonds1965,
  author  = {Jack Edmonds},
  title   = {Paths, Trees, and Flowers},
  journal = {Canadian Journal of Mathematics},
  year    = {1965},
  volume  = {17},
  pages   = {449--467},
  doi     = {10.4153/CJM-1965-045-4},
}

@article{Neyman1990,
  title={On the application of probability theory to agricultural experiments. Essay on principles. Section 9},
  author={Splawa-Neyman, Jerzy and Dabrowska, Dorota M and Speed, Terrence P},
  journal={Statistical Science},
  pages={465--472},
  year={1990},
  publisher={JSTOR}
}

@article{azriel2026block,
  title={Block designs that provide optimal power in the Cochran--Mantel--Haenszel test: D. Azriel et al.},
  author={Azriel, David and Kapelner, Adam and Krieger, Abba M},
  journal={Statistical Papers},
  volume={67},
  number={2},
  pages={28},
  year={2026},
  publisher={Springer}
}

@book{Horn_Johnson_1991, place={Cambridge}, title={Topics in Matrix Analysis}, publisher={Cambridge University Press}, author={Horn, Roger A. and Johnson, Charles R.}, year={1991}}

@article{Harshaw2024,
author = {Christopher Harshaw and Fredrik Sävje and Daniel A. Spielman and Peng Zhang},
title = {Balancing Covariates in Randomized Experiments with the Gram–Schmidt Walk Design},
journal = {Journal of the American Statistical Association},
volume = {119},
number = {548},
pages = {2934--2946},
year = {2024},
publisher = {Taylor \& Francis},
doi = {10.1080/01621459.2023.2285474},
}

@book{Van2000,
  title={Asymptotic statistics},
  author={Van der Vaart, Aad W},
  volume={3},
  year={2000},
  publisher={Cambridge university press}
}

@book{varga2004gershgorin,
  title={Gersgorin and His Circles},
  author={Varga, Richard S.},
  series={Springer Series in Computational Mathematics},
  volume={36},
  year={2004},
  publisher={Springer},
  address={Berlin, Heidelberg},
  isbn={978-3-540-21100-6},
  doi={10.1007/978-3-642-17798-9}
}

@book{horn2013matrix,
  title     = {Matrix Analysis},
  author    = {Horn, Roger A. and Johnson, Charles R.},
  edition   = {2},
  year      = {2013},
  publisher = {Cambridge University Press},
  address   = {Cambridge},
  isbn      = {9780521548236}
}

@article{Derigs1988,
    author  = {Ulrich Derigs},
    title   = {Solving Non-Bipartite Matching Problems via Shortest Path Techniques},
    journal = {Annals of Operations Research},
    year    = {1988},
    volume  = {13},
    number  = {1},
    pages   = {225--261},
    doi     = {10.1007/BF02288324}
  }

@Manual{Beck2024nbpMatching,
    title  = {{nbpMatching}: Functions for Optimal Non-Bipartite Matching},
    author = {Cole Beck and Bo Lu and Robert Greevy},
    year   = {2024},
    note   = {R package version 1.5.6},
    url    = {https://CRAN.R-project.org/package=nbpMatching},
    doi    = {10.32614/CRAN.package.nbpMatching}
  }

@Manual{Papenberg2026,
    title  = {{anticlust}: Subset Partitioning via Anticlustering},
    author = {Martin Papenberg},
    year   = {2026},
    note   = {R package version 0.8.14},
    url    = {https://CRAN.R-project.org/package=anticlust},
    doi    = {10.32614/CRAN.package.anticlust}
  }
